\newtheorem{theorem}{Theorem}[section]
\newtheorem{lemma}[theorem]{Lemma}
\theoremstyle{definition}
\newtheorem{definition}[theorem]{Definition}
\newtheorem{example}[theorem]{Example}
\theoremstyle{remark}
\newcommand{\F}{\mathbb{F}}
\newcommand{\x}{\mathbf{x}}
\newcommand{\C}{{\mathcal{C}}}
\newcommand{\GRS}{{\mathrm{GRS}}}
\newcommand{\rank}{{\rm rank}}
\newcommand{\Rmnum}[1]{\expandafter\@slowromancap\romannumeral #1@}
\begin{document}
	\begin{sloppypar}
		\title{\Large{New MDS codes of non-GRS type and NMDS codes  }
			\thanks{The authors are with School of Mathematics, Hefei University of Technology, Hefei, 230601, China (email: yujiezhimath@163.com, zhushixinmath@hfut.edu.cn).}
			\thanks{This research is supported by the National Natural Science Foundation of China (Grant Nos. U21A20428 and 12171134).}}
		\author{Yujie Zhi, Shixin Zhu\thanks{Corresponding author}}
		\maketitle
		
		\begin{abstract} 
			Maximum distance separable (MDS) and near maximum distance separable (NMDS) codes have been widely used in various fields such as communication systems, data storage, and quantum codes due to their algebraic properties and excellent error-correcting capabilities. This paper focuses on a specific class of linear codes and establishes necessary and sufficient conditions for them to be MDS or NMDS. Additionally, we employ the well-known Schur method to demonstrate that they are non-equivalent to generalized Reed-Solomon codes.
		\end{abstract}
		{\bf Keywords:} MDS code, NMDS code, Schur product, Non-GRS \\
		{\bf Mathematics Subject Classificatio:} 94B05 12E10

		\section{Introduction}\label{sec.introduction}
		
		Let $\F_q$ be the finite field of size $q=p^m$, where $p$ is a prime and $m$ is a positive integer. Let $\F_q^*=\F_q\setminus \{0\}$ be the multiplicative group of $\F_q$.
		An $[n,k,d]_q$ {\em linear code} $\C$ is a $k$-dimensional subspace of $\F_q^n$ with a minimum Hamming distance $d$. The {\em Euclidean dual code} of an $[n,k,d]_q$ linear code $\C$ is given by 	
		\begin{align}
			\C^{\perp}=\left\{\mathbf{y}\in \F_q^n: \langle \mathbf{x},\mathbf{y} \rangle=0, \forall~ \mathbf{x}\in \C \right\},
		\end{align}
		where $\langle \mathbf{x}, \mathbf{y} \rangle=\sum_{i=1}^{n}x_iy_i$  
		for any two vectors $\mathbf{x}=(x_1,x_2,\cdots,x_n)$ and $\mathbf{y}=(y_1,y_2,\cdots,y_n) \in \F_q^n$.
		
		For any $[n,k,d]_q$ linear code $\C$, the parameters $n$, $k$ and $d$ are subject to the well-known {\em Singleton bound}: $d\leq n-k+1$ \cite{HP2003}. Define the {\em Singleton defect} of $\C$ by $S(\C)=n-k+1-d$. 
		If $S(\C)=0$, we call  $\C$ is a {\em maximum distance separable (MDS)} code and if $S(\C)=1$, we call $\C$ an {\em almost MDS (AMDS)} code. 
		Moreover, $\C$ is called a {\em near MDS (NMDS)} code if both $\C$ and $\C^{\perp}$ are AMDS codes.
		Let ${\bf A}=\{\alpha_1,\alpha_{2},\cdots,\alpha_n\}$ is a subset of $\F_{q}$. We say that ${\bf A}$ is an {\em $(n,t,\delta)$-set} in $\F_q$ if there exists an element $\delta\in \F_q$ such that no $t$ elements of ${\bf A}$ sum to $\delta$;
		${\bf A}$ contains a {\em t-zero-sum subset} if there exist $t$ elements of ${\bf A}$ sum to zero;
		${\bf A}$ is {\em t-zero-sum free} if ${\bf A}$ contains no zero-sum subset of size $t$.

		Both MDS codes and NMDS codes have attracted lots of attention due to their important applications in various fields, including distributed storage systems \cite{CHL2011}, combinatorial designs \cite{DT2020}, finite geometry \cite{DL1994}, random error channels \cite{ST2013}, informed source and index coding problems \cite{TR2018}, and secret sharing scheme \cite{SV2018,ZWXLQY2009}.

		Let $k$ and $n$ be two positive integers with $1\leq k\leq n\leq q$. Let ${\bf v}=(v_1,v_2,\cdots,v_n)\in (\F_q^*)^n$ and ${\bf A}=\{\alpha_1,\alpha_2,\cdots,\alpha_n\}\subseteq \F_q$. 
		Then an $[n,k,n-k+1]_{q}$ {\em generalized Reed-Solomon (GRS) code} associated with $\mathbf{A}$ and $\mathbf{v}$ is defined by  
		\begin{align}\label{eq.def GRS}
			\begin{split}
				\GRS_k(\mathbf{A},\mathbf{v}) =  \{(v_1f(\alpha_1),v_2f(\alpha_2),\cdots,v_nf(\alpha_n))\mid f(x)\in \mathbb{F}_{q}[x]_{k}\}.
			\end{split}
		\end{align} 
		From \cite{HP2003}, a generator matrix of $\GRS_k(\mathbf{A},\mathbf{v})$ has the following form    
		\begin{equation}\label{eq.GRS.generator matrix}
			G_{\GRS_k}=\left(\begin{array}{cccc}
				v_1 & v_2 & \cdots & v_n \\
				v_1\alpha_1 & v_2\alpha_2 & \cdots & v_n\alpha_n \\
				v_1\alpha^2_1 & v_2\alpha^2_2 & \cdots & v_n\alpha^2_n \\
				\vdots & \vdots & \ddots & \vdots \\
				v_1\alpha^{k-1}_1 & v_2\alpha^{k-1}_2 & \cdots & v_n\alpha^{k-1}_n \\
			\end{array}\right).
		\end{equation}
		Thus, using the generator matrix of $\GRS_k(\mathbf{A},\mathbf{v})$ provides a way to construct MDS codes that are equivalent to GRS codes.
		However, it is still very interesting to construct non-GRS MDS codes,	because the non-GRS properties make them resistant to Wieschebrink and Sidelnikov-Shestakov attacks \cite{LR2020,BBPR2018}. There are many constructions of non-GRS  MDS codes and NMDS codes, including: adding columns in the generator matrices of GRS codes\cite{RL1989,HL2023}, adding twists in GRS codes (see \cite{BPR2022,GLLS2023,ZZT2022,HYNL2021,SYLH2022,SYS2023,W2021,ZLA2024,WHL2021,LL2021,LCEL2022,CW2023,GZ2023} and the references therein), removing a row from the generator matrices of GRS codes\cite{HZ2023}, adding a column and removing a row from the generator matrices of GRS codes\cite{LZ2024}, and so forth.
		
		Specifically, Roth and Lempel in \cite{RL1989} introduced a construction of non-GRS MDS codes by adding two columns in the generator matrices of GRS codes. Their codes $\C_1$ over $\F_q$ have the following generator matrices of the form: 
		\begin{equation}\label{eq.RL generator matrix} G_1=
			\left( \begin{array}{cccccc}
				1 & 1& \cdots &  1 & 0 &0\\
				\alpha_1& \alpha_2& \cdots &  \alpha_{n} & 0 &0\\
				\vdots& \vdots& \ddots & \vdots & \vdots &\vdots\\
				\alpha_1^{k-3}& \alpha_2^{k-3}& \cdots &  \alpha_{n}^{k-3} & 0 &0\\
				\alpha_1^{k-2}& \alpha_2^{k-2}& \cdots &  \alpha_{n}^{k-2} & 0 &1\\
				\alpha_1^{k-1}& \alpha_2^{k-1}& \cdots &  \alpha_{n}^{k-1} & 1 &\delta\\
			\end{array} \right),
		\end{equation} 	
		where $\alpha_{1},\alpha_{2}\cdots,\alpha_{n}$ are distinct elements in $\F_q$,  $ \delta\in \F_q$, and $4\le k + 1 \le n \le q$. They proved that $\C_1$ is a non-GRS MDS code if and only if the set $\{\alpha_1,\alpha_2,\cdots,\alpha_{n}\}$ is an $(n,k-1,\delta)$-set in $\F_{q}$. Han and Fan in \cite{HF2023} further obtained NMDS codes when $\delta=0$. They proved that $\C_1$ is an NMDS code if and only if $\{\alpha_{1},\alpha_{2},\cdots,\alpha_{n}\}$ contains a $(k-1)$-zero-sum subset in $\F_{q}$.

		Based on Roth-Lempel codes, Wu et~al. in \cite{WHLD2024} gave a construction of non-GRS MDS codes by adding a column in the generator matrices $G_1$. Their codes $\C_2$ over $\F_q$ have the following generator matrices of the form: 
		\begin{equation}\label{eq.Wu generator matrix} G_2=
			\left( \begin{array}{cccccccccccc}
				1 & 1& \cdots &  1 & 0 & 0 & 0\\
				\alpha_1& \alpha_2& \cdots &  \alpha_{n} & 0 & 0 & 0\\
				\vdots& \vdots& \ddots & \vdots & \vdots & \vdots & \vdots\\
				\alpha_1^{k-3}& \alpha_2^{k-3}& \cdots &  \alpha_{n}^{k-3} & 0 & 0 & 1\\
				\alpha_1^{k-2}& \alpha_2^{k-2}& \cdots &  \alpha_{n}^{k-2} & 0 & 1 & \tau\\
				\alpha_1^{k-1}& \alpha_2^{k-1}& \cdots &  \alpha_{n}^{k-1} & 1 & \delta & \pi\\
			\end{array} \right),
		\end{equation} 	
		where $\alpha_{1},\alpha_{2}\cdots,\alpha_{n}$ are distinct elements in $\F_q$ and $ \delta, \tau, \pi \in \F_q$, and $4\le k + 1 \le n \le q$. They proved that $\C_2$ is a non-GRS MDS code if and only if the set $\{\alpha_1, \cdots, \alpha_n\}$ is an $(n,k-1,\delta)$-set and an $(n,k-2,\tau)$-set in $\F_q$, and for any subsets $I,J\subseteq \{1,2, \cdots, n\}$ with size $k-1$ and $k-2$, respectively, $\sum\limits_{i\neq j\in I}\alpha_i\alpha_j+\pi\neq \tau (\sum\limits_{i\in I} \alpha_i) 
		\mbox{   }\mbox{ and } \mbox{   }
		\pi+\delta(\sum\limits_{i\in J}\alpha_i)\neq \tau\delta+\sum\limits_{i\in J}\alpha_i^2+\sum\limits_{ i\neq j\in J}\alpha_i\alpha_j$ hold.
		They also determined that $\C_2$ is an NMDS code if and only if there exists a subset $I\subseteq \{1,2, \cdots, n\}$ with size $k-1$ such that $\sum\limits_{i\neq j\in I}\alpha_i\alpha_j+\pi= \tau (\sum\limits_{i\in I} \alpha_i)$ or  there exists a subset $J\subseteq \{1,2, \cdots, n\}$ with size $k-2$ such that $\pi+\delta(\sum\limits_{i\in J}\alpha_i)= \tau\delta+\sum\limits_{i\in J}\alpha_i^2+\sum\limits_{ i\neq j\in J}\alpha_i\alpha_j.$
		
		Han and Zhang in \cite{HZ2023} constructed MDS codes and NMDS codes by removing a row in the generator matrices of GRS codes.
		Their codes $\C_3$ over $\F_q$ have the following generator matrices of the form: 
		\begin{equation}\label{eq.HZ generator matrix} G_3=
			\left( \begin{array}{cccccc}
				1 & 1& \cdots &  1  \\
				\alpha_1& \alpha_2& \cdots &  \alpha_{n}\\
				\vdots& \vdots& \ddots & \vdots \\
				\alpha_1^{k-2}& \alpha_2^{k-2}& \cdots &  \alpha_{n}^{k-2} \\
				\alpha_1^{k}& \alpha_2^{k}& \cdots &  \alpha_{n}^{k} \\
			\end{array} \right),
		\end{equation} 	
		where $\alpha_{1},\alpha_{2},\cdots,\alpha_{n}$ are distinct elements in $\F_q$ and $3\le k \le n \le q$. They proved that $\C_3$ is an MDS code if and only if $\{\alpha_{1},\alpha_{2},\cdots,\alpha_{n}\}$ is $k$-zero-sum free and $\C_3$ is an NMDS code if and only if $\{\alpha_{1},\alpha_{2},\cdots,\alpha_{n}\}$ contains a zero-sum subset of size $k$ in $\F_{q}$.

		In \cite{LZ2024}, the authors introduced another construction of  MDS codes and NMDS codes by adding one column in the generator matrices $G_3$ in \cite{HZ2023}.
		Their codes $\C_4$ over $\F_q$ have the following generator matrices of the form: 
		\begin{equation}\label{eq.LZ generator matrix} G_4=
			\left( \begin{array}{cccccc}
				1 & 1& \cdots &  1 & 0 \\
				\alpha_1& \alpha_2& \cdots &  \alpha_{n} & 0 \\
				\vdots& \vdots& \ddots & \vdots & \vdots \\
				\alpha_1^{k-2}& \alpha_2^{k-2}& \cdots &  \alpha_{n}^{k-2} & 0 \\
				\alpha_1^{k}& \alpha_2^{k}& \cdots &  \alpha_{n}^{k} & 1 \\
			\end{array} \right),
		\end{equation} 	
		where $\alpha_{1},\alpha_{2},\cdots,\alpha_{n}$ are distinct elements in $\F_q$ and $3\le k \le n \le q$. They proved that $\C_4$ is a non-GRS MDS code if and only if the set $\{\alpha_1,\alpha_2,\cdots,\alpha_{n}\}$ is $k$-zero-sum free in $\F_{q}$ and $\C_4$ is an NMDS code if and only if the set $\{\alpha_1,\alpha_2,\cdots,\alpha_{n}\}$ contains a $k$-zero-sum subset in $\F_{q}$.  
		
		Inspired by above constructions, in order to obtain more non-GRS MDS codes and NMDS codes, it is nature to consider the following two classes of codes:
		\begin{itemize}
			\item As \cite{HZ2023} and \cite{LZ2024} do, we delete the penultimate row of $G_1$ to get a class of code, it is not hard to see that there exist two columns of the deleted matrices are linearly dependent. Therefore, it is uninteresting.
			\item As \cite{RL1989} and \cite{WHLD2024} do, we add a column in the generator matrices $G_4$ (i.e., the linear code $\C_{k}({\bf A},{\bf v})$ (see Definition \ref{def.codes})).
			Our main contributions can be summarized as follows: 
			\begin{itemize}
				\item We prove $\C_k({\bf A},{\bf v})$ is an extended code of $\C_4$ generated by $G_4$ in Theorem \ref{th.extended code}.
				
				\item We determine the parameters and a parity-check matrix of $\C_k({\bf A},{\bf v})$ in Theorems \ref{th.length and dimension} and \ref{th.parity-check matrix}.
				
				\item We determine sufficient and necessary conditions for $\C_k({\bf A},{\bf v})$ to be MDS in Theorem \ref{th.MDS condition}.  Based on the Schur method, we further show that such codes are non-GRS in Theorem \ref{th.non-GRS}. 
				
				\item We characterize sufficient and necessary conditions for $\C_k({\bf A},{\bf v})$ to be NMDS in Theorem \ref{th.NMDS condition}. 
			\end{itemize}		
			
		\end{itemize}

		This paper is organized as follows. 
		In Section \ref{sec2.preliminaries}, we recall some knowledge on the Schur product, a kind of extended codes of linear codes and some useful results. 
		In Section \ref{sec3}, we prove that $\C_{k}({\bf A}, {\bf v})$ is an extended code of $\C_4$ and study the parameters and a parity-check matrix of $\C_{k}({\bf A}, {\bf v})$.
		In Section \ref{sec4}, we determine sufficient and necessary conditions for these codes to be non-GRS MDS codes. 
		In Section \ref{sec5}, we determine sufficient and necessary conditions for these codes to be NMDS codes. 
		In Section \ref{sec.concluding remarks}, we conclude our main contributions in this paper.

		\section{Preliminaries}\label{sec2.preliminaries}
		\subsection{The Schur product}
		
		In this subsection, we will recall some useful results on the Schur product.
		
		\begin{definition}\label{def.Schur product}
			For any two vectors ${\bf x}=(x_1,x_2,\cdots,x_n) \mbox{ and } {\bf y}=(y_1,y_2\cdots,y_n) \in \F_q^n$, we define the Schur product ${\bf x} \star {\bf y}\in \F_q^n$ by
			$${\bf x} \star {\bf y}=(x_1\cdot y_1,x_2\cdot y_2,\cdots,x_n\cdot y_n).$$
		\end{definition}
		
		\begin{definition}\label{def.Schur product codes}
			For given two $[n,k]_q$ linear codes $\C_1$ and $\C_2$, the Schur product $\C_1 \star \C_2$ is defined by $${\C_1} \star {\C_2}=\{{\bf c_1} \star {\bf c_2}:  {\bf c_1}\in \C_1 \mbox{ and } {\bf c_2}\in \C_2\}.$$ 
			
			If $\C_1=\C_2=\C$, we further denote ${\C_1} \star {\C_2}=\C^2$ and call it the Schur square of $\C$.
		\end{definition}
		
		The following two lemmas give the Schur square of a GRS code. 
		
		\begin{lemma}{\rm (\!\!\cite[Proposition 10]{MMP2013},\cite[Proposition 2.1 (1)]{ZLT2024},\cite[Lemma 3.3]{HL2023})}\label{lem.GRS square dimension}
			If $\C$ is an $[n+1,k]_q$ GRS code with $3\leq k<\frac{n+2}{2}$, then $$\dim(\C^2)=2k-1.$$ 
		\end{lemma}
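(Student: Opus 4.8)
The plan is to show that the Schur square of a length-$(n+1)$ GRS code is again a length-$(n+1)$ GRS code, now of dimension $2k-1$, so that the dimension count becomes immediate. Write $\C=\GRS_k(\mathbf{A},\mathbf{v})$ with $\mathbf{A}=\{\alpha_1,\dots,\alpha_{n+1}\}\subseteq\F_q$ a set of $n+1$ distinct elements and $\mathbf{v}=(v_1,\dots,v_{n+1})\in(\F_q^*)^{n+1}$, and recall that $\C$ is the evaluation code
\[
\C=\{(v_1f(\alpha_1),\dots,v_{n+1}f(\alpha_{n+1})) : f\in\F_q[x]_k\},
\]
where $\F_q[x]_k$ denotes the space of polynomials of degree at most $k-1$. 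As is standard, $\C^2$ is understood as the linear code spanned by all Schur products $\bc_1\star\bc_2$ with $\bc_1,\bc_2\in\C$, which is the convention under which $\dim(\C^2)$ makes sense.

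The first step is to establish the identity $\C^2=\GRS_{2k-1}(\mathbf{A},\mathbf{v}\star\mathbf{v})$. For the inclusion ``$\subseteq$'', take $\bc_1=(v_if(\alpha_i))_{i}$ and $\bc_2=(v_ig(\alpha_i))_{i}$ with $\deg f,\deg g\le k-1$; then $\bc_1\star\bc_2=\bigl(v_i^2\,(fg)(\alpha_i)\bigr)_{i}$ and $\deg(fg)\le 2k-2$, so $\bc_1\star\bc_2\in\GRS_{2k-1}(\mathbf{A},\mathbf{v}\star\mathbf{v})$; since this code is linear, it contains the span of all such products. For ``$\supseteq$'', it is enough to realise each row $(v_i^2\alpha_i^{\,j})_{i}$, $0\le j\le 2k-2$, of a generator matrix of $\GRS_{2k-1}(\mathbf{A},\mathbf{v}\star\mathbf{v})$ as a single Schur product of two codewords of $\C$: set $a=\min(j,k-1)$ and $b=j-a$, so that $0\le a,b\le k-1$, and observe that $(v_i^2\alpha_i^{\,j})_{i}=(v_i\alpha_i^{\,a})_{i}\star(v_i\alpha_i^{\,b})_{i}\in\C\star\C$. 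Note that $\mathbf{v}\star\mathbf{v}\in(\F_q^*)^{n+1}$ because $\F_q^*$ is multiplicatively closed, so the right-hand side is genuinely a GRS code.

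Finally, $\dim(\C^2)=\dim\GRS_{2k-1}(\mathbf{A},\mathbf{v}\star\mathbf{v})$, and here the hypothesis $k<\tfrac{n+2}{2}$ is used: it yields $2k-1\le n<n+1$, so $2k-1$ is an admissible dimension for a code of length $n+1$ evaluated at the $n+1$ distinct points of $\mathbf{A}$, and $\GRS_{2k-1}(\mathbf{A},\mathbf{v}\star\mathbf{v})$ is a bona fide $[n+1,2k-1]_q$ GRS code (hence MDS), of dimension exactly $2k-1$. Therefore $\dim(\C^2)=2k-1$. The argument is essentially routine; the one point that needs care is the bookkeeping that $2k-1$ does not exceed the length $n+1$ — otherwise the Schur square could fill up more of $\F_q^{n+1}$ and the target $\GRS_{2k-1}$ would not even be defined — which is exactly what the strict inequality $k<\tfrac{n+2}{2}$ guarantees. (The condition $k\ge 3$ plays no role in this proof; it only delimits the range of interest.)
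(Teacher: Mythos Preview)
Your proof is correct. The paper itself does not supply a proof of this lemma; it is quoted from the references \cite{MMP2013,ZLT2024,HL2023} without argument. Your approach---showing $\C^2=\GRS_{2k-1}(\mathbf{A},\mathbf{v}\star\mathbf{v})$ by a two-way inclusion on generators and then reading off the dimension once the length constraint $2k-1\le n<n+1$ is secured---is exactly the standard one and matches what is found in those sources. Your remark that $\C^2$ must be read as the \emph{span} of the Schur products (rather than the bare set, as the paper's Definition~\ref{def.Schur product codes} literally says) is well taken and necessary for the dimension statement to make sense.
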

		
		\begin{lemma}{\rm (\!\!\cite[Lemma 3.3]{HL2023})}\label{lem.GRS square distance}
			If $\C$ is an $[n+1,k]_q$ GRS code with $3\leq k<\frac{n+2}{2}$, then $$d(\C^2)\geq 2.$$ 
		\end{lemma}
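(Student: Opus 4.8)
The plan is to identify the Schur square $\C^2$ explicitly as a generalized Reed--Solomon code and then to read off a lower bound on its minimum distance by counting roots of a polynomial. Write $\C=\GRS_k(\mathbf{A},\mathbf{v})$ with $\mathbf{A}=\{\alpha_1,\ldots,\alpha_{n+1}\}$ a set of $n+1$ distinct elements of $\F_q$ and $\mathbf{v}=(v_1,\ldots,v_{n+1})\in(\F_q^*)^{n+1}$, so that a generic codeword of $\C$ is $\bigl(v_1f(\alpha_1),\ldots,v_{n+1}f(\alpha_{n+1})\bigr)$ with $f\in\F_q[x]$, $\deg f\le k-1$.

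The key observation is that the Schur product turns evaluation of polynomials into multiplication of polynomials:
\[
\bigl(v_1f(\alpha_1),\ldots,v_{n+1}f(\alpha_{n+1})\bigr)\star\bigl(v_1g(\alpha_1),\ldots,v_{n+1}g(\alpha_{n+1})\bigr)=\bigl(v_1^2(fg)(\alpha_1),\ldots,v_{n+1}^2(fg)(\alpha_{n+1})\bigr).
\]
Since the products $fg$ with $\deg f,\deg g\le k-1$ linearly span the space of polynomials of degree at most $2k-2$ (for instance $x^a=x^a\cdot 1$ and $x^{k-1+b}=x^{k-1}\cdot x^b$ for $0\le a,b\le k-1$ already give all monomials $x^0,\ldots,x^{2k-2}$), and since $\mathbf{v}\star\mathbf{v}\in(\F_q^*)^{n+1}$, I would conclude
\[
\C^2=\Bigl\{\bigl(v_1^2h(\alpha_1),\ldots,v_{n+1}^2h(\alpha_{n+1})\bigr):h\in\F_q[x],\ \deg h\le 2k-2\Bigr\}=\GRS_{2k-1}(\mathbf{A},\mathbf{v}\star\mathbf{v}),
\]
which is a bona fide GRS code because $k<\tfrac{n+2}{2}$ forces $2k-1\le n<n+1$; its dimension is $2k-1$, in agreement with Lemma \ref{lem.GRS square dimension}.

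To finish, suppose toward a contradiction that $\C^2$ contains a nonzero codeword of Hamming weight $1$, supported on the coordinate $i$. By the description above there is a nonzero $h\in\F_q[x]$ with $\deg h\le 2k-2$ such that $h(\alpha_j)=0$ for all $j\neq i$, i.e.\ $h$ has at least $n$ distinct roots in $\F_q$; but $\deg h\le 2k-2\le n-1$, which is impossible. As $\C^2$ is nonzero (its dimension is $2k-1\ge 5$ since $k\ge 3$), this yields $d(\C^2)\ge 2$. Equivalently, one may note that $\C^2=\GRS_{2k-1}(\mathbf{A},\mathbf{v}\star\mathbf{v})$ is MDS with parameters $[n+1,2k-1]$, hence $d(\C^2)=n+1-(2k-1)+1=n-2k+3\ge 2$, again using $2k-1\le n$. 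I do not anticipate any real obstacle: the only substantive point is the identification of $\C^2$ with $\GRS_{2k-1}(\mathbf{A},\mathbf{v}\star\mathbf{v})$ via polynomial multiplication, after which the result is a routine degree-versus-number-of-roots count; the sole thing to be careful about is tracking the inequality $2k-1\le n$ that comes from the hypothesis $k<\tfrac{n+2}{2}$ and underwrites both that $\GRS_{2k-1}$ is well defined and that $\deg h\le n-1$.
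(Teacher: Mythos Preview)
Your proof is correct. The paper does not give its own proof of this lemma but simply cites it from \cite{HL2023}; your argument---identifying $\C^2$ with $\GRS_{2k-1}(\mathbf{A},\mathbf{v}\star\mathbf{v})$ via polynomial multiplication and then either counting roots or invoking the Singleton bound---is the standard one and is almost certainly what the cited reference does as well.
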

		
		\subsection{A kind of extended codes of linear codes}
		
		In this subsection, we will introduce a kind of extended codes of linear codes proposed by Sun et al. in \cite{SDC2023}.
		
		\begin{definition}\label{def.extended codes}
			Let ${\bf w}=(w_1,w_2,\cdots,w_n)\in \F_q^n$ be any nonzero vector. Any given $[n,k,d]$ linear code $\C$ can be extended into an $[n+1,k,\overline{d}]$ code $\overline{\C}({\bf w})$ over $\F_{q}$ as follows:
			\begin{equation}\label{eq.extended code}
				\overline{\C}({\bf w})=\bigg\{  (c_1, \cdots, c_n,c_{n+1}): (c_1, \cdots, c_n)\in \mathcal C,c_{n+1}=\sum_{i=1}^nw_ic_i\bigg\},
			\end{equation}
			where $\overline{d} = d$ or $\overline{d} = d +1$.  
		\end{definition}

		\begin{lemma}\label{lem.extended code generator and parity-check matrix}{\rm (\!\!\cite{SDC2023})}
			Let $\C$ be an $[n,k,d]$ linear code over $\F_q$ and ${\bf w} \in \F_q^n$. If $\C $ has a generator matrix $G$ and a parity-check matrix $H$, then the generator and parity-check matrices for the extended code $\overline{\C}({\bf w})$ defined in \eqref{eq.extended code} are $$(G~ G{\bf w}^T) \mbox{ and } 
			\left(\begin{array}{cccc} H & {\bf 0}^T\\ {\bf w} & -1
			\end{array}\right),$$
			where ${\bf w}^T$ denotes the transpose of ${\bf w}$.
			
		\end{lemma}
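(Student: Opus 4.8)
The plan is to verify the two assertions — one about the generator matrix, one about the parity-check matrix — directly from the defining equation \eqref{eq.extended code}, so the whole argument reduces to elementary linear algebra.

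First I would handle the generator matrix. Every element of $\C$ is of the form ${\bf a}G$ for some ${\bf a}\in\F_q^k$, and by the definition of $\overline{\C}({\bf w})$ the associated codeword of length $n+1$ is $\bigl({\bf a}G,\,({\bf a}G){\bf w}^T\bigr)={\bf a}\,(G~\,G{\bf w}^T)$. Hence the row space of $(G~\,G{\bf w}^T)$ is exactly $\overline{\C}({\bf w})$. Since the first $n$ columns of $(G~\,G{\bf w}^T)$ already contain the rank-$k$ matrix $G$, the extended matrix still has rank $k$, which matches $\dim\overline{\C}({\bf w})=k$; therefore $(G~\,G{\bf w}^T)$ is a generator matrix of $\overline{\C}({\bf w})$.

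Next I would treat the parity-check matrix $\overline{H}=\left(\begin{array}{cc} H & {\bf 0}^T\\ {\bf w} & -1\end{array}\right)$, checking three points. (i) Size: $H$ is $(n-k)\times n$, so $\overline{H}$ has $(n-k)+1=n+1-k$ rows and $n+1$ columns, and $n+1-k=\dim\overline{\C}({\bf w})^{\perp}$. (ii) Full row rank: the first $n-k$ rows are linearly independent because $H$ is a parity-check matrix of $\C$, and the last row $({\bf w},-1)$ cannot lie in their span since it is the only row with a nonzero (namely $-1$) entry in the last coordinate; hence $\rank\overline{H}=n+1-k$. (iii) Orthogonality: for any codeword $\overline{\bc}=(\bc,c_{n+1})\in\overline{\C}({\bf w})$ with $\bc\in\C$ and $c_{n+1}=\sum_{i=1}^{n}w_ic_i={\bf w}\bc^T$, a block multiplication gives $\overline{H}\,\overline{\bc}^T=\bigl(H\bc^T,\ {\bf w}\bc^T-c_{n+1}\bigr)^T=\bigl({\bf 0},\,0\bigr)^T$, using $H\bc^T={\bf 0}$ and the defining relation for $c_{n+1}$. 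A matrix with $n+1-k$ independent rows that annihilates all of $\overline{\C}({\bf w})$ is a parity-check matrix, which finishes the proof.

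I do not expect a genuine obstacle here: the only mildly delicate point is the bookkeeping in (i) and (ii), namely confirming that $\overline{H}$ has exactly $n+1-k$ linearly independent rows after adjoining the new column and the new row; once the block structure is written down, every verification is immediate from the definition of $\overline{\C}({\bf w})$.
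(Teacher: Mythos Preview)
Your proof is correct. The paper itself does not prove this lemma at all --- it is quoted verbatim from \cite{SDC2023} and used as a black box --- so there is no ``paper's proof'' to compare against; your direct verification via the row-space description of $(G~G{\bf w}^T)$ and the three-point check (shape, rank, orthogonality) for $\overline{H}$ is exactly the standard argument and would serve perfectly well as the omitted proof.
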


		\subsection{Some auxiliary results} 
		
		This subsection reviews some useful results for later use. 
		
		\begin{lemma}{\rm(\!\!\cite[Page 466]{H1929}, \cite[Lemma 17]{LDMTT2022})}
			For any integer $n\geq 3$, it holds that 
			\begin{align}\label{eq.det1}
				\det\left(\begin{array}{ccccc}
					1 & 1 & \cdots & 1 & 1 \\ 
					\alpha_1 & \alpha_2 & \cdots & \alpha_{n-1} & \alpha_n \\
					\alpha_1^2 & \alpha_2^2 & \cdots & \alpha_{n-1}^2 & \alpha_n^2 \\
					\vdots & \vdots & \ddots & \vdots & \vdots \\
					\alpha_1^{n-2} & \alpha_2^{n-2} & \cdots & \alpha_{n-1}^{n-2} & \alpha_n^{n-2} \\
					\alpha_1^n & \alpha_2^n & \cdots & \alpha_{n-1}^{n} & \alpha_n^n 
				\end{array}\right)=\sum_{i=1}^{n}\alpha_i\prod_{1\leq i<j\leq n}(\alpha_j-\alpha_i).               
			\end{align}
		\end{lemma}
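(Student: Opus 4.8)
The plan is to deduce \eqref{eq.det1} from the classical Vandermonde determinant formula by a single coefficient comparison; this keeps the argument self-contained and free of any restriction on $\chara \F_q$, since it is really a polynomial identity over $\Z$. Write $D_n=D_n(\alpha_1,\dots,\alpha_n)$ for the $n\times n$ determinant on the left of \eqref{eq.det1}, and let $V_n=\prod_{1\le i<j\le n}(\alpha_j-\alpha_i)$ be the ordinary Vandermonde determinant. Introduce a new indeterminate $x$ and consider the $(n+1)\times(n+1)$ Vandermonde determinant in the nodes $\alpha_1,\dots,\alpha_n,x$, namely
$$W(x)=\det\left(\begin{array}{cccc}
1 & \cdots & 1 & 1 \\
\alpha_1 & \cdots & \alpha_n & x \\
\vdots & \ddots & \vdots & \vdots \\
\alpha_1^{\,n-1} & \cdots & \alpha_n^{\,n-1} & x^{n-1} \\
\alpha_1^{\,n} & \cdots & \alpha_n^{\,n} & x^{n}
\end{array}\right).$$
By the standard Vandermonde evaluation one has the polynomial identity $W(x)=V_n\cdot\prod_{s=1}^{n}(x-\alpha_s)$ in $\F_q[\alpha_1,\dots,\alpha_n][x]$.

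The key step is to compute the coefficient of $x^{n-1}$ in $W(x)$ in two ways. On one hand, expanding $W(x)$ along its last column (whose $i$-th entry is $x^{i-1}$ for $i\le n$ and $x^{n}$ for the last row), the $x^{n-1}$ term is produced only by the entry in row $n$, and its contribution is $(-1)^{n+(n+1)}M$, where $M$ is the minor obtained by deleting row $n$ and the last column. Row $n$ is exactly the row carrying exponent $n-1$, so the surviving rows carry the exponents $0,1,\dots,n-2,n$ and the surviving columns carry $\alpha_1,\dots,\alpha_n$; hence $M=D_n$. Since $(-1)^{n+(n+1)}=-1$, the coefficient of $x^{n-1}$ in $W(x)$ equals $-D_n$. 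On the other hand, from $W(x)=V_n\prod_{s=1}^{n}(x-\alpha_s)=V_n\big(x^{n}-(\sum_{i=1}^{n}\alpha_i)x^{n-1}+\cdots\big)$ the same coefficient equals $-\big(\sum_{i=1}^{n}\alpha_i\big)V_n$. Equating the two expressions gives $D_n=\big(\sum_{i=1}^{n}\alpha_i\big)\prod_{1\le i<j\le n}(\alpha_j-\alpha_i)$, which is precisely \eqref{eq.det1}.

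The only thing that needs care is the bookkeeping in the Laplace expansion: one must check that it is the deletion of the row of exponent $n-1$ (the exponent missing from the left-hand side of \eqref{eq.det1}) that reproduces $D_n$, and that the cofactor sign $(-1)^{n+(n+1)}$ is $-1$; both are immediate once the rows are labelled by their exponents. As an alternative, one may argue directly: $D_n$ is alternating in $\alpha_1,\dots,\alpha_n$, hence vanishes whenever $\alpha_i=\alpha_j$, so $V_n\mid D_n$ in the UFD $\F_q[\alpha_1,\dots,\alpha_n]$; a degree count shows $D_n/V_n$ is homogeneous of degree $1$, and being symmetric it must be a scalar multiple of $\alpha_1+\cdots+\alpha_n$, the scalar being $1$ by inspecting the coefficient of, say, $\alpha_2\alpha_3^{2}\cdots\alpha_{n-1}^{\,n-2}\alpha_n^{\,n}$ on both sides. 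I would present the first route, as it needs nothing beyond the Vandermonde formula already invoked in the statement.
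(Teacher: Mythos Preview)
Your argument is correct. The paper does not give its own proof of this lemma at all: it is stated with citations to \cite{H1929} and \cite{LDMTT2022} and then used without further justification. So there is nothing to compare your approach against on the paper's side.

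That said, your route via the auxiliary $(n+1)\times(n+1)$ Vandermonde $W(x)$ and comparison of the $x^{n-1}$-coefficient is clean and entirely self-contained; the sign bookkeeping $(-1)^{n+(n+1)}=-1$ and the identification of the minor $M$ with $D_n$ are both right. The alternative you sketch (divisibility by $V_n$ from the alternating property, then a degree-and-symmetry argument to pin down the quotient as $\sum_i\alpha_i$) is also valid, and in fact is closer in spirit to the classical treatment in \cite{H1929}. Either would serve; the first is preferable for a reader who only has the ordinary Vandermonde formula at hand, since it avoids invoking unique factorization in a polynomial ring.
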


		\begin{lemma}{\rm(\!\!\cite[Lemma 2]{FX2024})}
			For any integer $n\geq 3$, it holds that
			\begin{align}\label{eq.det2}
				\det\left(\begin{array}{ccccc}
					1 & 1 & \cdots & 1 & 1 \\ 
					\alpha_1 & \alpha_2 & \cdots & \alpha_{n-1} & \alpha_n \\
					\alpha_1^2 & \alpha_2^2 & \cdots & \alpha_{n-1}^2 & \alpha_n^2 \\
					\vdots & \vdots & \ddots & \vdots & \vdots \\
					\alpha_1^{n-2} & \alpha_2^{n-2} & \cdots & \alpha_{n-1}^{n-2} & \alpha_n^{n-2} \\
					\alpha_1^{n+1 }& \alpha_2^{n+1} & \cdots & \alpha_{n-1}^{n+1} & \alpha_n^{n+1} 
				\end{array}\right)
				=\bigg((\sum_{i=1}^{n}\alpha_i)^2-\sum_{1\leq i<j\leq n}\alpha_i\alpha_j\bigg)\prod_{1\leq i<j\leq n}(\alpha_j-\alpha_i). 		            
			\end{align}
		\end{lemma}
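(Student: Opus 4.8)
The plan is to establish \eqref{eq.det2} as an identity in $\Z[\alpha_1,\ldots,\alpha_n]$, so that it then holds over an arbitrary field by specialization; in particular we may work in characteristic $0$. Write $D_n$ for the determinant on the left-hand side, $V_n=\prod_{1\le i<j\le n}(\alpha_j-\alpha_i)$, $e_1=\sum_{i=1}^n\alpha_i$, and $e_2=\sum_{1\le i<j\le n}\alpha_i\alpha_j$. Interchanging two columns of $D_n$ changes its sign, so $D_n$ vanishes on each hyperplane $\alpha_i=\alpha_j$, and hence $V_n\mid D_n$. A degree count gives $\deg D_n=(0+1+\cdots+(n-2))+(n+1)=\binom{n-1}{2}+n+1$ and $\deg V_n=\binom{n}{2}$, so $D_n/V_n$ is homogeneous of degree $2$; being a quotient of two alternating polynomials it is symmetric, and therefore $D_n/V_n=c_1e_1^2+c_2e_2$ for scalars $c_1,c_2$ independent of the $\alpha_i$, since the symmetric homogeneous polynomials of degree $2$ in $n\ge 2$ variables form a $2$-dimensional space spanned by $e_1^2$ and $e_2$. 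It then remains to prove $c_1=1$ and $c_2=-1$.

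For $c_1$, I would compare the coefficients of $\alpha_1^{n+1}$ in $D_n=(c_1e_1^2+c_2e_2)V_n$. Expanding $D_n$ along its first column, whose entries are $1,\alpha_1,\ldots,\alpha_1^{n-2},\alpha_1^{n+1}$, only the last entry yields $\alpha_1$-degree exceeding $n-2$, and its cofactor is the $(n-1)\times(n-1)$ Vandermonde in $\alpha_2,\ldots,\alpha_n$; thus the coefficient of $\alpha_1^{n+1}$ in $D_n$ is $(-1)^{n+1}\prod_{2\le i<j\le n}(\alpha_j-\alpha_i)$. On the other side, $c_1e_1^2+c_2e_2$ has $\alpha_1$-degree $2$ with leading coefficient $c_1$, while $V_n=\bigl(\prod_{j=2}^n(\alpha_j-\alpha_1)\bigr)\prod_{2\le i<j\le n}(\alpha_j-\alpha_i)$ has $\alpha_1$-degree $n-1$ with leading coefficient $(-1)^{n-1}\prod_{2\le i<j\le n}(\alpha_j-\alpha_i)$, so the coefficient of $\alpha_1^{n+1}$ on the right is $c_1(-1)^{n-1}\prod_{2\le i<j\le n}(\alpha_j-\alpha_i)$. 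As $(-1)^{n+1}=(-1)^{n-1}$, this forces $c_1=1$.

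For $c_2$, I would first show it does not depend on $n$ and then compute the smallest case. Setting $\alpha_n=0$ and expanding $D_n$ along the last column $(1,0,\ldots,0)^T$, then factoring $\alpha_1\cdots\alpha_{n-1}$ out of the surviving minor, one obtains the generalized Vandermonde in $\alpha_1,\ldots,\alpha_{n-1}$ with exponent set $\{0,1,\ldots,(n-1)-2,(n-1)+1\}$, i.e.\ $D_{n-1}$; hence $D_n|_{\alpha_n=0}=(-1)^{n+1}\alpha_1\cdots\alpha_{n-1}\,D_{n-1}$. Matching this with the $\alpha_n=0$ specialization of $(e_1^2+c_2e_2)V_n$, in which $e_1,e_2$ pass to their $(n-1)$-variable analogues and $V_n$ to $(-1)^{n-1}\alpha_1\cdots\alpha_{n-1}V_{n-1}$, and invoking the identity for $n-1$ together with the linear independence of $e_1^2$ and $e_2$, shows that $c_2$ has the same value for all $n\ge 3$. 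Finally, taking $n=3$ and $(\alpha_1,\alpha_2,\alpha_3)=(0,1,-1)$ gives $D_3=2$, $e_1=0$, $e_2=-1$, $V_3=2$, so $2=(0-c_2)\cdot 2$, i.e.\ $c_2=-1$. Therefore $D_n=(e_1^2-e_2)V_n$, which is \eqref{eq.det2}.

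I do not expect a genuine obstacle here: this is the classical "pull out the Vandermonde, count degrees, identify the leftover symmetric factor" scheme, the same one that proves \eqref{eq.det1}. The only slightly delicate point is pinning down the constant $c_2$. Conceptually, $D_n$ is the bialternant attached to the partition $\lambda=(2)$, so $D_n=\pm s_{(2)}\cdot V_n=\pm h_2\cdot V_n$ with $h_2=e_1^2-e_2$, and one could instead quote the Jacobi bialternant (Schur polynomial) formula in place of the explicit small-case evaluation.
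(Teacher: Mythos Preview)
Your proof is correct. The paper itself does not prove this lemma; it merely quotes it from \cite{FX2024}, so there is no in-paper argument to compare against. Your approach---factor out the Vandermonde by the alternating property, observe by degree count that the quotient is symmetric homogeneous of degree $2$, hence a combination $c_1e_1^2+c_2e_2$, and then pin down $c_1,c_2$ by a leading-coefficient comparison and a specialization/recursion---is the standard and clean way to handle such generalized Vandermonde (bialternant) identities, and all the computations check out (in particular the $n=3$ evaluation and the $\alpha_n=0$ reduction to $D_{n-1}$). Your closing remark that this is the Schur/bialternant identity for $\lambda=(2)$, giving $D_n/V_n=h_2=e_1^2-e_2$ directly, is also a perfectly valid shortcut and would let you skip the determination of $c_1,c_2$ altogether.
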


		\begin{lemma}\label{lem.PRA}
			Let $u_i=\prod\limits_{1\leq j\leq n \atop j\neq i}(\alpha_i-\alpha_j)^{-1}$ for $1\leq i\leq n$. For any subset ${\bf A}=\{\alpha_1,\alpha_2,\cdots,\alpha_n\}\subseteq \F_q$ with $n\geq 3$, it holds that  
			\begin{align}\label{eq.ui value}
				\sum_{i=1}^{n}\alpha_i^{\ell}u_i=\left\{\begin{array}{ll}
					0, & 0\leq \ell\leq n-2, \\
					1, & \ell=n-1, \\
					\sum\limits_{i=1}^n \alpha_i, & \ell=n, \\
					(\sum\limits_{i=1}^{n}\alpha_i)^2-\sum\limits_{1\leq i<j\leq n}\alpha_i\alpha_j, &\ell=n+1.
				\end{array}\right.
			\end{align}
		\end{lemma}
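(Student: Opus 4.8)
The plan is to write each sum $\sum_{i=1}^{n}\alpha_i^{\ell}u_i$ as a ratio of two determinants and then read off its value from the Vandermonde evaluation together with the two determinant identities \eqref{eq.det1} and \eqref{eq.det2} recorded just above. Concretely, let $V$ be the $n\times n$ matrix whose $(t,i)$ entry is $\alpha_i^{t-1}$ for $1\le t\le n$, so that $\det V=\prod_{1\le i<j\le n}(\alpha_j-\alpha_i)$, and for $\ell\ge 0$ let $M_\ell$ be the matrix obtained from $V$ by replacing its last row with $(\alpha_1^{\ell},\alpha_2^{\ell},\cdots,\alpha_n^{\ell})$.

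First I would establish the master identity $\sum_{i=1}^{n}\alpha_i^{\ell}u_i=\det M_\ell/\det V$ for every $\ell\ge 0$. Expanding $\det M_\ell$ along its last row gives $\det M_\ell=\sum_{i=1}^{n}\alpha_i^{\ell}C_i$, where the cofactor $C_i=(-1)^{n+i}\prod_{j<k,\ j,k\neq i}(\alpha_k-\alpha_j)$ is a signed Vandermonde determinant on $\{\alpha_1,\cdots,\alpha_n\}\setminus\{\alpha_i\}$. On the other hand, isolating in $\det V$ the factors that involve $\alpha_i$ yields $\det V=(-1)^{n-i}\,u_i^{-1}\prod_{j<k,\ j,k\neq i}(\alpha_k-\alpha_j)$, because $\prod_{k>i}(\alpha_k-\alpha_i)\prod_{j<i}(\alpha_i-\alpha_j)=(-1)^{n-i}\prod_{j\neq i}(\alpha_i-\alpha_j)=(-1)^{n-i}u_i^{-1}$. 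Since $(-1)^{n+i}=(-1)^{n-i}$, dividing gives $C_i/\det V=u_i$, hence $\det M_\ell=(\det V)\sum_{i=1}^{n}\alpha_i^{\ell}u_i$. This sign bookkeeping when splitting the Vandermonde product is the only delicate point of the argument; everything else is formal.

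With the master identity in hand the four cases are immediate. For $0\le\ell\le n-2$ the matrix $M_\ell$ has two identical rows (the row $(\alpha_1^{\ell},\cdots,\alpha_n^{\ell})$ already occurs among the first $n-1$ rows of $V$), so $\det M_\ell=0$. For $\ell=n-1$ one has $M_{n-1}=V$ and the ratio is $1$. For $\ell=n$, $\det M_n$ is precisely the determinant evaluated in \eqref{eq.det1}, equal to $\bigl(\sum_{i=1}^{n}\alpha_i\bigr)\prod_{1\le i<j\le n}(\alpha_j-\alpha_i)$, so the ratio is $\sum_{i=1}^{n}\alpha_i$. For $\ell=n+1$, $\det M_{n+1}$ is the determinant in \eqref{eq.det2}, equal to $\bigl((\sum_{i=1}^{n}\alpha_i)^2-\sum_{1\le i<j\le n}\alpha_i\alpha_j\bigr)\prod_{1\le i<j\le n}(\alpha_j-\alpha_i)$, so the ratio is $(\sum_{i=1}^{n}\alpha_i)^2-\sum_{1\le i<j\le n}\alpha_i\alpha_j$, as claimed. (An equivalent route is to note that $\sum_i\alpha_i^{\ell}u_i$ is the coefficient of $x^{n-1}$ in the Lagrange interpolant of $x^{\ell}$ at the nodes $\alpha_1,\cdots,\alpha_n$ and to reduce $x^n$ and $x^{n+1}$ modulo $\prod_{j=1}^{n}(x-\alpha_j)$; I would keep the determinant version, since the needed evaluations are already available as \eqref{eq.det1} and \eqref{eq.det2}.)
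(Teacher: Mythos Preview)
Your proof is correct and is essentially the same cofactor/determinant-ratio argument that the paper invokes: the paper cites \cite{LXW2008}, \cite{FLL2020}, \cite{LZ2024} for the cases $\ell\le n$ and then, for $\ell=n+1$, instructs the reader to combine Equation~\eqref{eq.det2} with ``the proof given in \cite[Lemma~5]{LXW2008}'', which is precisely your master identity $\sum_i\alpha_i^{\ell}u_i=\det M_\ell/\det V$. You have simply written out that method uniformly for all four cases rather than deferring to the literature.
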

		
		\begin{proof}
			When $0\leq \ell\leq n-2$, $\ell=n-1$ and $\ell=n$, the desired results have been documented in \cite[Lemma 5]{LXW2008}, \cite[Lemma \Rmnum{2}.1]{FLL2020} and \cite[Lemma 2.8] {LZ2024}, respectively. 
			When $\ell=n+1$, combining Equation \eqref{eq.det2} and the proof given in \cite[Lemma 5]{LXW2008}, we get the desired results.
		\end{proof}
		
		The following lemmas can be used to determine whether a given linear code is an MDS code or NMDS code.
		
		\begin{lemma}{\rm (\!\!\cite[Lemma 7.3]{B2015})}\label{lem.MDS} 
			Let $\C$ be an $[n,k,d]_q$ linear code with a generator matrix $G$. 
			Then $\C$ is MDS if and only if any $k$ columns of $G$ are linearly independent.	 
		\end{lemma}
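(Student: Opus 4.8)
The plan is to prove both implications by contraposition, exploiting the standard dictionary between low-weight codewords of $\C$ and sets of linearly dependent columns of $G$. Throughout I use that $G$ is a $k\times n$ matrix of rank $k$, so its rows are linearly independent and hence $\bx G=\mathbf 0$ forces $\bx=\mathbf 0$ for $\bx\in\F_q^k$; in particular $\bx\mapsto \bx G$ sends nonzero vectors to nonzero codewords.

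First I would treat the direction ``$\C$ MDS $\Rightarrow$ every $k$ columns of $G$ are independent'' by contradiction. Suppose some index set $T\subseteq\{1,\dots,n\}$ with $|T|=k$ is such that the corresponding $k\times k$ submatrix $G_T$ has dependent columns, i.e. $G_T$ is singular. Then there is a nonzero $\bx\in\F_q^k$ with $\bx G_T=\mathbf 0$, so the codeword $\bc=\bx G$ is nonzero but vanishes on all $k$ coordinates indexed by $T$. Hence $\wt(\bc)\le n-k<n-k+1$, contradicting $d=n-k+1$. Therefore every $k$ columns of $G$ are linearly independent.

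Conversely, assuming that any $k$ columns of $G$ are linearly independent, I would show $d\ge n-k+1$; together with the Singleton bound $d\le n-k+1$ this gives equality, i.e. $\C$ is MDS. Suppose instead that some nonzero codeword $\bc$ has $\wt(\bc)\le n-k$. Writing $\bc=\bx G$ with $\bx\neq\mathbf 0$, the set $S$ of zero-coordinates of $\bc$ satisfies $|S|\ge k$; pick any $k$-subset $S_0\subseteq S$. Restricting $\bc=\bx G$ to the columns indexed by $S_0$ yields $\bx\,G_{S_0}=\mathbf 0$ with $\bx\neq\mathbf 0$, so those $k$ columns of $G$ are linearly dependent, a contradiction. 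I do not anticipate any real obstacle: the argument is entirely elementary, and the only points requiring a little care are the rank-$k$ remark ensuring $\bx\neq\mathbf 0\Rightarrow\bx G\neq\mathbf 0$ and the explicit appeal to the Singleton bound to close the second implication.
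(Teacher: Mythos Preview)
Your proof is correct and is precisely the standard argument for this classical fact. Note, however, that the paper does not supply its own proof of this lemma: it is stated there without proof and merely cited from \cite[Lemma~7.3]{B2015}, so there is no in-paper argument to compare against.
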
	

	\begin{lemma}{\rm (\!\!\cite[Lemma 3.1]{DL1994})}\label{lem.NMDS2} 
				Let $\C$ be an $[n,k,d]_q$ linear code with a generator matrix $G$. 
			Then $\C$ is NMDS if and only if 
			\begin{itemize}
				\item any $k-1$ columns of $G$ are linearly independent;
				\item there exist $k$ linearly dependent columns in $G$;
				\item any $k+1$ columns of $G$ are of full rank.
			\end{itemize}
		\end{lemma}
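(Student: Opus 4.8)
The plan is to reduce the NMDS property to two minimum-distance equalities, one for $\C$ and one for $\C^{\perp}$, and then translate each into a rank condition on column subsets of $G$. First I would unwind the definitions: a code is AMDS exactly when its Singleton defect equals $1$, so $\C$ is AMDS iff $d(\C)=n-k$; and since $\C^{\perp}$ is an $[n,n-k,d^{\perp}]_q$ code whose Singleton defect is $k+1-d^{\perp}$, the dual $\C^{\perp}$ is AMDS iff $d^{\perp}=k$. Hence $\C$ is NMDS if and only if $d(\C)=n-k$ and $d(\C^{\perp})=k$, and the task becomes to rephrase these two equalities in terms of $G$.

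Second, I would invoke two elementary facts about $G$ (which has full row rank $k$). On the primal side, the minimum distance of $\C$ equals $n-s$, where $s$ is the largest size of a column subset of $G$ of rank strictly less than $k$; equivalently, $d(\C)\ge\delta$ iff every $n-\delta+1$ columns of $G$ have rank $k$, while $d(\C)\le\delta$ iff some $n-\delta$ columns of $G$ have rank strictly less than $k$. (This refines Lemma~\ref{lem.MDS}, which is the case $\delta=n-k+1$.) On the dual side, since $G$ is a parity-check matrix of $\C^{\perp}$, the standard parity-check criterion for the minimum distance (cf.\ \cite{HP2003}) gives $d(\C^{\perp})\ge\delta$ iff every $\delta-1$ columns of $G$ are linearly independent, hence $d(\C^{\perp})\le\delta$ iff some $\delta$ columns of $G$ are linearly dependent.

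Third, I would specialize: taking $\delta=n-k$ on the primal side yields ``$d(\C)\ge n-k$ iff any $k+1$ columns of $G$ have full rank'' and ``$d(\C)\le n-k$ iff some $k$ columns of $G$ are linearly dependent''; taking $\delta=k$ on the dual side yields ``$d(\C^{\perp})\ge k$ iff any $k-1$ columns of $G$ are linearly independent'' and ``$d(\C^{\perp})\le k$ iff some $k$ columns of $G$ are linearly dependent''. Conjoining these four statements, $\C$ is NMDS precisely when the three listed conditions hold; note that the existence of $k$ linearly dependent columns arises from both the $\C$-side and the $\C^{\perp}$-side, so it is a genuine part of the characterization and not a consequence of the other two conditions — indeed an MDS code satisfies the first and third but not the second. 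The argument is essentially bookkeeping; the only points requiring care are the dual half, namely correctly treating $G$ as a parity-check matrix of $\C^{\perp}$, and checking that each equality splits cleanly as a ``$\ge$'' plus a ``$\le$'' statement so that exactly the three bullets are recovered (together with the harmless remark that $n\ge k+1$, forced by $d(\C)=n-k\ge1$, makes the three conditions non-vacuous).
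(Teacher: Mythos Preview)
Your argument is correct: you cleanly split ``NMDS'' into the two equalities $d(\C)=n-k$ and $d(\C^{\perp})=k$, and then translate each inequality into the appropriate column-rank condition on $G$, using that $G$ is simultaneously a generator matrix of $\C$ and a parity-check matrix of $\C^{\perp}$. The bookkeeping is accurate, and your remark that the ``some $k$ columns are dependent'' condition arises from both sides (and is not implied by the other two) is a nice sanity check.

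As for comparison: the paper does not give its own proof of this lemma at all --- it is quoted verbatim from \cite[Lemma 3.1]{DL1994} and used as a black box. So there is nothing in the paper to compare your approach against; your write-up simply supplies the standard justification that the paper omits.
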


		\section{The class of linear codes $\C_k(\bf{A},{\bf v})$}\label{sec3}
		
		In this section, we give the definition of $\C_k(\bf{A},{\bf v})$ and examine their parameters and parity-check matrices. We also prove $\C_k(\bf{A},{\bf v})$ is an extended code of $\C_4$ in\cite{LZ2024}. 
		
		The linear codes $\C_k({\bf A},{\bf v})$ are defined as follow.
		\begin{definition}\label{def.codes}
			Let $n$ and $k$ be two positive integers satisfying $3\leq k\leq n\leq q$.
			Let ${\bf A}=\{\alpha_1,\alpha_2,\cdots,\alpha_{n}\}\subseteq \F_q$, ${\bf v}=(v_1,v_2,\cdots,v_{n})\in (\F_q^*)^n$ and $\delta\in \F_{q}$.  
			Denote by $\C_k({\bf A},{\bf v})$ the $q$-ary linear code generated by the matrix 
			\begin{equation}\label{eq.generator matrix}
				G_k=\begin{pmatrix}    
					v_1 & v_2 & \cdots & v_{n} & 0 & 0\\ 
					v_1\alpha_1 & v_2\alpha_2 & \cdots & v_{n}\alpha_{n} & 0 & 0\\
					v_1\alpha_1^2 & v_2\alpha_2^2 & \cdots & v_{n}\alpha_{n}^2 & 0 & 0\\
					\vdots & \vdots & \ddots & \vdots & \vdots & \vdots\\
					v_1\alpha_1^{k-2} & v_2\alpha_2^{k-2} & \cdots & v_{n}\alpha_{n}^{k-2} & 0 & 1\\
					v_1\alpha_1^k & v_2\alpha_2^k & \cdots & v_{n}\alpha_{n}^k & 1 & \delta\\
				\end{pmatrix}.      
			\end{equation}        
			Let $V_k=\left\{f(x)=\sum_{i=0}^{k-2}f_ix^i+f_{k}x^{k}:~f_i\in \F_q, i\in \{0,1,\cdots,k-2,k\}\right\}$.
			Then $\C_k({\bf A},{\bf v})$ can be further expressed as  
			\begin{align*}
				\C_k({\bf A},{\bf v})=\left\{ \left(v_1f(\alpha_1),v_2f(\alpha_2),\cdots,v_nf(\alpha_n),f_{k},f_{k-2}+\delta f_{k}\right): ~f(x)\in V_k \right\},  
			\end{align*}
			where $f_{k}$ and $f_{k-2}$ are the coefficients of $x^k$ and $x^{k-2}$ in $f(x)$, respectively.     
		\end{definition}
		
		\subsection{An extended code of linear code $\C_4$}
		In this subsection, we will give a theorem for the relationship between $\C_k({\bf A},{\bf v})$ and $\C_4$.
		
		\begin{theorem}\label{th.extended code}
			The $q$-ary linear code $\C_k({\bf A},{\bf v})$ generated by $G_k$ given in Equation (\ref{eq.generator matrix}) is an extended code of the code $\C_4$ generated by $G_4$ in Equation \eqref{eq.LZ generator matrix}, denoted by $$\C_k({\bf A},{\bf v})=\overline{\C_4}(\bf w),$$ where ${\bf w}=(w_1, w_2, \cdots, w_{n+1})\in \F_q^{n+1}$ and for $1\le i\le n,$
			$$w_i=\alpha_i^{n-k+1}\prod_{1\leq j\leq n\atop j\neq i}(\alpha_i-\alpha_j)^{-1}$$
			and
			$$w_{n+1}=\delta-(\sum_{i=1}^n \alpha_i)^2+\sum_{1\le i<j\le n}\alpha_i\alpha_j.$$
		\end{theorem}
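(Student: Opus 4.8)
The plan is to invoke Lemma~\ref{lem.extended code generator and parity-check matrix} and then carry out a single power-sum computation using Lemma~\ref{lem.PRA}. Throughout, $\C_4$ is understood to carry the same multiplier vector $\bf v$ (the general form of the construction in \cite{LZ2024}, cf.\ \eqref{eq.LZ generator matrix}), i.e.\ its generator matrix is
$$
G_4=\begin{pmatrix}
v_1 & v_2 & \cdots & v_n & 0\\
v_1\alpha_1 & v_2\alpha_2 & \cdots & v_n\alpha_n & 0\\
\vdots & \vdots & \ddots & \vdots & \vdots\\
v_1\alpha_1^{k-2} & v_2\alpha_2^{k-2} & \cdots & v_n\alpha_n^{k-2} & 0\\
v_1\alpha_1^{k} & v_2\alpha_2^{k} & \cdots & v_n\alpha_n^{k} & 1
\end{pmatrix},
$$
a $k\times(n+1)$ matrix; since the extension in Definition~\ref{def.extended codes} cannot by itself introduce the column scalings $v_i$, this is the form of $\C_4$ for which the claim can hold. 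By Lemma~\ref{lem.extended code generator and parity-check matrix}, $\overline{\C_4}({\bf w})$ is generated by $(G_4~G_4{\bf w}^T)$. Comparing columns with $G_k$ in \eqref{eq.generator matrix}, the first $n+1$ columns of $G_k$ already coincide with those of $G_4$, so it suffices to show $G_4{\bf w}^T=(0,0,\dots,0,1,\delta)^T$, the last column of $G_k$; then $(G_4~G_4{\bf w}^T)=G_k$ and hence $\C_k({\bf A},{\bf v})=\overline{\C_4}({\bf w})$.

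To compute $G_4{\bf w}^T$, write $u_i=\prod_{1\le j\le n,\,j\ne i}(\alpha_i-\alpha_j)^{-1}$ as in Lemma~\ref{lem.PRA}, so that $w_i=\alpha_i^{\,n-k+1}u_i$ for $1\le i\le n$. Only the bottom row of $G_4$ is nonzero in its $(n+1)$-st coordinate, so the entry of $G_4{\bf w}^T$ coming from the row with exponent $\ell\in\{0,1,\dots,k-2\}$ is $\sum_{i=1}^n\alpha_i^{\,\ell}w_i=\sum_{i=1}^n\alpha_i^{\,n-k+1+\ell}u_i$, and the entry from the bottom row (exponent $k$) is $\sum_{i=1}^n\alpha_i^{\,n+1}u_i+w_{n+1}$.

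Finally I would read off the answer from Lemma~\ref{lem.PRA}. For $\ell\in\{0,\dots,k-3\}$ the exponent $n-k+1+\ell$ runs through $\{n-k+1,\dots,n-2\}$, which lies in $\{0,\dots,n-2\}$ because $3\le k\le n$, so those entries vanish; for $\ell=k-2$ the exponent equals $n-1$, giving the entry $1$; and for the bottom row the exponent is $n+1$, so $\sum_{i=1}^n\alpha_i^{\,n+1}u_i=(\sum_{i=1}^n\alpha_i)^2-\sum_{1\le i<j\le n}\alpha_i\alpha_j$, and adding $w_{n+1}=\delta-(\sum_{i=1}^n\alpha_i)^2+\sum_{1\le i<j\le n}\alpha_i\alpha_j$ leaves exactly $\delta$. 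This gives $G_4{\bf w}^T=(0,\dots,0,1,\delta)^T$ and completes the argument.

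The computation is routine once Lemma~\ref{lem.PRA} is available; the only delicate point is keeping track of which exponents appear, the ``gap'' between the rows $x^{k-2}$ and $x^{k}$ of $G_4$ being exactly what places the relevant power-sum exponents at $\{n-k+1,\dots,n-1\}$ together with $n+1$, so that precisely one entry equals $n-1$ (producing the lone $1$) and the exceptional row, after correction by $w_{n+1}$, contributes exactly $\delta$ with nothing spurious left over.
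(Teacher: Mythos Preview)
Your argument is correct and matches the paper's approach: both reduce via Lemma~\ref{lem.extended code generator and parity-check matrix} to checking $G_4\mathbf{w}^T=(0,\dots,0,1,\delta)^T$ and then evaluate the entries using the power-sum identities (the paper packages the first $k-1$ of these as a Cramer's-rule system rather than citing Lemma~\ref{lem.PRA} directly, but the content is identical). One small inconsistency to clean up: having written $G_4$ with the scalars $v_i$, the row-$\ell$ entry should be $\sum_i v_i\alpha_i^{\ell}w_i$, not $\sum_i\alpha_i^{\ell}w_i$; since the stated $w_i$ contain no $v_i^{-1}$, your computation is really the case $\mathbf v=\mathbf 1$---which is exactly the $G_4$ of \eqref{eq.LZ generator matrix}, so nothing is lost (the paper glosses over the same point).
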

		
		\begin{proof}  
			By Lemma \ref{lem.extended code generator and parity-check matrix}, it suffices  to check that $G_{4}{\bf w}^T=(0, 0,\cdots, 0,1, \delta)^T$. 
			
			According to the Cramer Rule, the system of the equations
			\begin{equation}\label{eq.equations system}
				\left(\begin{array}{ccccccc} 
					v_1 & v_2 & \cdots  & v_n\\ 
					v_1\alpha_1 & v_2\alpha_2 & \cdots & v_n\alpha_{n}\\
					\vdots & \vdots & \ddots & \vdots \\ 
					v_1\alpha_1^{n-1} & v_2\alpha_2^{n-1} & \cdots & v_n\alpha_{n}^{n-1}
				\end{array}\right)\left(\begin{array}{c}x_1\\x_2\\ \vdots
					\\ x_n\end{array}\right)=\left(\begin{array}{c}0\\
					0\\ \vdots\\  1 \end{array}\right)
			\end{equation}		
			has a nonzero solution $(u_1,u_2,\cdots, u_{n})^T$, where 
			$$u_i=\prod\limits_{1\leq j\leq n \atop j\neq i}(\alpha_i-\alpha_j)^{-1},\mbox{  } i=1,2,\cdots,n.$$  
			
			From Equation \eqref{eq.equations system}, we only need to check that  $$(\alpha_1^{k},\alpha_2^{k}, \cdots, \alpha_n^{k},1){\bf w}^T=\sum_{i=1}^nu_i\alpha_i^{n+1}+w_{n+1}=\delta.$$
			
			Then, we have $$w_{n+1}=\delta-(\sum_{i=1}^n \alpha_i)^2+\sum_{1\le i<j\le n}\alpha_i\alpha_j.$$
			
			This completes the proof.
		\end{proof}

		\subsection{Parameters and a parity-check matrix of $\C_k({\bf A},{\bf v})$} 
		
		In this subsection, we determine parameters and a parity-check matrix of $\C_k({\bf A},{\bf v})$.
		
		\begin{theorem}\label{th.length and dimension} 
			Let $\C_k({\bf A},{\bf v})$ be the $q$-ary linear code generated by $G_k$ given in Equation (\ref{eq.generator matrix}). 
			Then $\C_k({\bf A},{\bf v})$ has parameters $[n+2,k]_q$. 
		\end{theorem}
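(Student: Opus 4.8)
The plan is to obtain the length directly from the shape of the generator matrix and to pin down the dimension by a single rank computation. The matrix $G_k$ in \eqref{eq.generator matrix} has $k$ rows and $n+2$ columns, so $\C_k({\bf A},{\bf v})$ has length $n+2$ and dimension at most $k$; the only thing that needs proof is that the rows of $G_k$ are linearly independent, i.e.\ $\rank G_k = k$.

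To see this I would exhibit one nonsingular $k\times k$ submatrix of $G_k$. Retaining the first $k-2$ coordinate columns (those indexed $1,\dots,k-2$) together with the last two columns (indexed $n+1$ and $n+2$) yields a $k\times k$ matrix $M$ in which the two appended columns vanish in rows $1,\dots,k-2$, while in rows $k-1$ and $k$ they carry the entries $(0,1)$ and $(1,\delta)$ respectively. Expanding $\det M$ by the Laplace rule along these two columns, the only term that does not vanish selects rows $k-1$ and $k$ for the $2\times 2$ minor; that minor equals $-1$, and its complementary minor is the $(k-2)\times(k-2)$ matrix $\big(v_j\alpha_j^{\,i-1}\big)_{1\le i,j\le k-2}$, a column-scaled Vandermonde matrix whose determinant is $\big(\prod_{j=1}^{k-2}v_j\big)\prod_{1\le i<j\le k-2}(\alpha_j-\alpha_i)$. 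This is nonzero because the $v_j$ lie in $\F_q^*$ and the $\alpha_j$ are pairwise distinct, so $\det M\neq 0$ and hence $\rank G_k = k$. Note that $k\ge 3$ is exactly what makes the bookkeeping consistent (for $k=3$ the complementary minor is the $1\times 1$ block $(v_1)$).

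Alternatively, the dimension statement is immediate from Theorem~\ref{th.extended code}: since $\C_k({\bf A},{\bf v})=\overline{\C_4}({\bf w})$ and the extension construction of Definition~\ref{def.extended codes} keeps the dimension fixed while raising the length by one, it suffices to note that $\C_4$ has dimension $k$, which follows from the analogous $(k-1)\times(k-1)$ Vandermonde determinant computed from the first $k-1$ coordinate columns of $G_4$ together with its single appended column. Either way, there is no real obstacle here — the assertion reduces to the nonvanishing of an explicit determinant, and the only point requiring attention is the index bookkeeping caused by the jump from the $\alpha_i^{k-2}$ row to the $\alpha_i^{k}$ row and by the placement of the nonzero entries in the two appended columns.
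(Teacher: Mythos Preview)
Your proposal is correct. Your alternative route, via Theorem~\ref{th.extended code} and the dimension of $\C_4$, is exactly the paper's argument: the paper simply cites Definition~\ref{def.extended codes}, Theorem~\ref{th.extended code}, and \cite[Theorem~3.2]{LZ2024} for the parameters of $\C_4$ and concludes in one line.

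Your primary argument, on the other hand, is a genuinely different and more self-contained approach. By exhibiting the explicit $k\times k$ submatrix on columns $1,\dots,k-2,n+1,n+2$ and reducing its determinant to a $2\times 2$ block times a column-scaled Vandermonde, you bypass both the extension machinery and the external reference to \cite{LZ2024}. This buys independence from Theorem~\ref{th.extended code} and from the literature, at the cost of a short determinant computation; the paper's route is terser but relies on the chain of prior results. Both are valid, and your direct computation is arguably preferable in a self-contained exposition.
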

		\begin{proof} 
			From Definition \ref{def.extended codes}, combining Theorem \ref{th.extended code} and {\rm \cite[Theorem 3.2]{LZ2024}}, we immediately have that $\C_k({\bf A},{\bf v})$ has parameters $[n+2,k]_q$.
		\end{proof}	
		
		\begin{theorem}\label{th.parity-check matrix}    
			Let $\C_k({\bf A},{\bf v})$ be the $q$-ary linear code generated by $G_k$ given in Equation (\ref{eq.generator matrix}). 
			Suppose that $v_i'=u_iv_i^{-1}, u_i=\prod\limits_{1\leq j\leq n \atop j\neq i}(\alpha_i-\alpha_j)^{-1}$, for any $1\leq i\leq n$. 
			Then the following statements hold. 
			\begin{enumerate}
				\item [\rm (1)]The $(n-k+2)\times (n+2)$ matrix 
				\begin{equation}\label{eq.parity-check matrix}
					H_{n-k+2}=\begin{pmatrix}    
						v_1' & v_2' & \cdots & v_n' & 0 & 0 \\ 
						v_1'\alpha_1 & v_2'\alpha_2 & \cdots & v_n'\alpha_{n} & 0 & 0 \\
						\vdots & \vdots & \ddots & \vdots & \vdots & \vdots \\
						v_1'\alpha_1^{n-k-1} & v_2'\alpha_2^{n-k-1} & \cdots & v_n'\alpha_{n}^{n-k-1} & -1 & 0\\
						v_1'\alpha_1^{n-k} & v_2'\alpha_2^{n-k} & \cdots & v_n'\alpha_{n}^{n-k} & a  & 0\\
						v_1'\alpha_1^{n-k+1} & v_2'\alpha_2^{n-k+1} & \cdots & v_n'\alpha_{n}^{n-k+1}  & b & -1\\
					\end{pmatrix},     
				\end{equation}
				where $a=-\sum\limits_{i=1}^{n}\alpha_i$ and $b=\delta+\sum\limits_{1\leq i<j\leq n}\alpha_i\alpha_j-(\sum\limits_{i=1}^{n}\alpha_i)^2$, is a parity-check matrix of $\C_k({\bf A},{\bf v})$. 
				
				\item [\rm (2)] The dual code $\C_k({\bf A},{\bf v})^{\perp}$ is 
				\begin{align*}
					\left\{\left(v_1'g(\alpha_1), v_2'g(\alpha_2), \cdots, v_n'g(\alpha_n), -g_{n-k-1}+ag_{n-k}+bg_{n-k+1}, -g_{n-k+1}\right):~g(x)\in \F_q[x]_{n-k+2}\right\}, 
				\end{align*}
				where $a=-\sum\limits_{i=1}^{n}\alpha_i$ and $b=\delta+\sum\limits_{1\leq i<j\leq n}\alpha_i\alpha_j-(\sum\limits_{i=1}^{n}\alpha_i)^2.$
			\end{enumerate}
		\end{theorem}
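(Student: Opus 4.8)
The plan is to establish (1) through the two conditions that characterize a parity-check matrix: every row of $H_{n-k+2}$ lies in $\C_k({\bf A},{\bf v})^{\perp}$, and $\rank H_{n-k+2}=n-k+2$, which matches $\dim\C_k({\bf A},{\bf v})^{\perp}=(n+2)-k$ by Theorem \ref{th.length and dimension}. Statement (2) then follows at once by reading off the rowspace of $H_{n-k+2}$.

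For the orthogonality I would compute $G_kH_{n-k+2}^{T}$ and show it vanishes. The key observation is that $v_iv_i'=u_i$ for $1\le i\le n$, so that the contribution to the inner product of a $G_k$-row of exponent $s$ with an $H_{n-k+2}$-row of exponent $t$ coming from the first $n$ coordinates equals $\sum_{i=1}^{n}u_i\alpha_i^{s+t}$, which Lemma \ref{lem.PRA} evaluates explicitly. Whenever $s+t\le n-2$ this part is $0$; moreover the trailing two entries of the $G_k$-row in question are $(0,0)$ unless $s=k-2$ (where they are $(0,1)$) or $s=k$ (where they are $(1,\delta)$), and a direct check shows they contribute $0$ against the trailing entries of that $H_{n-k+2}$-row, so all such inner products vanish. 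The only pairs with $s+t>n-2$ are $(s,t)\in\{(k-2,n-k+1),(k,n-k-1),(k,n-k),(k,n-k+1)\}$, for which Lemma \ref{lem.PRA} makes $\sum_{i=1}^{n}u_i\alpha_i^{s+t}$ equal to $1$, $1$, $\sum_i\alpha_i$ and $(\sum_i\alpha_i)^2-\sum_{i<j}\alpha_i\alpha_j$ respectively; one then checks case by case that these are cancelled exactly by the contributions of the trailing entries of $G_k$ against the entries $-1,a,b,-1$ in the bottom rows of $H_{n-k+2}$, using $a=-\sum_i\alpha_i$, $b=\delta+\sum_{i<j}\alpha_i\alpha_j-(\sum_i\alpha_i)^2$ and the $\delta$ in the last row of $G_k$. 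This bookkeeping is short but is the one place where care is needed; it is essentially the only obstacle in the proof.

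For the rank I would exhibit an invertible $(n-k+2)\times(n-k+2)$ submatrix of $H_{n-k+2}$: take columns $1,2,\dots,n-k+1$ together with the last column, which is $(0,\dots,0,-1)^{T}$. Expanding the determinant along that last column leaves $\pm$ the determinant of the $(n-k+1)\times(n-k+1)$ matrix with $(\ell,i)$-entry $v_i'\alpha_i^{\ell}$ for $0\le\ell\le n-k$ and $1\le i\le n-k+1$, which equals $\big(\prod_{i=1}^{n-k+1}v_i'\big)\prod_{1\le i<j\le n-k+1}(\alpha_j-\alpha_i)\neq 0$ since the $\alpha_i$ are distinct and the $v_i'$ are nonzero. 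Hence $\rank H_{n-k+2}=n-k+2$; together with the orthogonality and Theorem \ref{th.length and dimension} this shows the rowspace of $H_{n-k+2}$ equals $\C_k({\bf A},{\bf v})^{\perp}$, proving (1). (One could also deduce (1) from Lemma \ref{lem.extended code generator and parity-check matrix} via the identification of Theorem \ref{th.extended code}; the direct verification above is self-contained.) Finally, for (2), the combination of the rows of $H_{n-k+2}$ in which the row of exponent $\ell$ is weighted by $g_\ell$ equals $\big(v_1'g(\alpha_1),\dots,v_n'g(\alpha_n),\ast,\ast\big)$ with $g(x)=\sum_{\ell=0}^{n-k+1}g_\ell x^{\ell}\in\F_q[x]_{n-k+2}$; its $(n+1)$-th coordinate collects the trailing entries $-1,a,b$ from the rows of exponents $n-k-1,n-k,n-k+1$ to give $-g_{n-k-1}+ag_{n-k}+bg_{n-k+1}$, and its $(n+2)$-th coordinate collects only the $-1$ from the bottom row to give $-g_{n-k+1}$, which is exactly the stated description.
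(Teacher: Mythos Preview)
Your proposal is correct and follows essentially the same approach as the paper: verify $G_kH_{n-k+2}^T=0$ by reducing the first $n$ coordinates to the sums $\sum_i u_i\alpha_i^{s+t}$ evaluated via Lemma~\ref{lem.PRA}, check the trailing entries cancel in the few cases $s+t\ge n-1$, and combine this with $\rank H_{n-k+2}=n-k+2$ and Theorem~\ref{th.length and dimension}. Your rank argument (an explicit $(n-k+2)\times(n-k+2)$ nonsingular submatrix) is more detailed than the paper's ``easy to see,'' and your remark that (1) could alternatively be obtained from Lemma~\ref{lem.extended code generator and parity-check matrix} via Theorem~\ref{th.extended code} is a nice observation not made in the paper's proof.
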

		\begin{proof}
			(1) On the one hand, it is easy to see that $\rank(H_{n-k+2})=n-k+2$. 
			On the other hand, we denote the $s$-th row of the generator matrix $G_k$ by ${\bf g}_s$ for any $1\leq s\leq k$ and denote the $t$-th row of the matrix $H_{n-k+2}$ by $\mathbf{h}_t$ for any $1\leq t\leq n-k+2$.  
			Then our main task becomes to prove ${\bf g}_s {\bf h}^T_t=0$ for any possible $s$ and $t$. 
			It is not difficult to calculate that  
			\begin{align*}   	
				{\bf g}_s {\bf h}^T_t=\left\{\begin{array}{ll}
					\sum\limits_{i=1}^n u_i\alpha_i^{s+t-2}, & {\rm if}~1\leq s\leq k-2~{\rm and}~1\leq t\leq n-k+2, \\ 
					\sum\limits_{i=1}^n u_i\alpha_i^{k+t-3}, & {\rm if}~s=k-1~{\rm and}~ 1\leq t\leq n-k+1, \\ 
					\sum\limits_{i=1}^n u_i\alpha_i^{n-1}-1, & {\rm if}~s=k-1~{\rm and}~t=n-k+2, \\ 
					\sum\limits_{i=1}^n u_i\alpha_i^{k+t-1}, & {\rm if}~s=k~{\rm and}~ 1\leq t\leq n-k-1, \\ 
					\sum\limits_{i=1}^n u_i\alpha_i^{n-1}-1, & {\rm if}~s=k~{\rm and}~t=n-k, \\ 
					\sum\limits_{i=1}^n u_i\alpha_i^{n}+a, & {\rm if}~s=k~{\rm and}~t=n-k+1, \\ 
					\sum\limits_{i=1}^n u_i\alpha_i^{n+1}+b-\delta, & {\rm if}~s=k~{\rm and}~t=n-k+2. 
				\end{array} \right.   
			\end{align*}
			Then the desired result (1) follows straightforward from Lemma \ref{lem.PRA}. 
			
			(2) Since a parity-check matrix of a linear code is a generator matrix of its dual code, then the expected result (2) immediately holds according to the result (1) above. 
			
			This completes the proof. 
		\end{proof}

		\section{MDS codes $\C_k({\bf A},{\bf v})$ of non-GRS type}\label{sec4}
		\subsection{When is the code $\C_k({\bf A},{\bf v})$ MDS?}
		In this subsection, our objective is to determine some necessary and sufficient conditions under which the code $\C_k({\bf A},{\bf v})$ is MDS. 
		By thoroughly investigating these conditions, our aim is to gain a comprehensive understanding of the MDS property of $\C_k({\bf A},{\bf v})$.
		
		\begin{theorem}\label{th.MDS condition}
			The code $\C_k({\bf A},{\bf v})$ generated by $G_k$ given in Equation (\ref{eq.generator matrix}) is an MDS code if and only if the following conditions hold:
			\begin{enumerate}
				\item [\rm (1)] For any subset $K$ with size $k$ of $\{1,2, \cdots, n\}$, $$\sum_{i\in K}\alpha_i\neq 0.$$
				
				\item [\rm (2)] For any subset $I$ with size $k-1$ of $\{1,2, \cdots, n\}$, $$(\sum_{i\in I}\alpha_i)^2-\sum_{ i<j\in I}\alpha_i\alpha_j\neq \delta.$$
			\end{enumerate}
			
		\end{theorem}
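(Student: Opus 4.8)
The plan is to invoke Lemma~\ref{lem.MDS}: since $\C_k({\bf A},{\bf v})$ has dimension $k$ by Theorem~\ref{th.length and dimension}, it is MDS if and only if every $k$ of the $n+2$ columns of $G_k$ are linearly independent. Write the columns as $\ba_1,\dots,\ba_{n+2}$, where $\ba_i=v_i\cdot(1,\alpha_i,\dots,\alpha_i^{k-2},\alpha_i^{k})^{T}$ for $1\le i\le n$, while $\ba_{n+1}=(0,\dots,0,1)^{T}$ is nonzero only in the last coordinate and $\ba_{n+2}=(0,\dots,0,1,\delta)^{T}$ is nonzero only in the last two coordinates. Since each $v_i$ is a unit, a chosen set of $k$ columns is independent precisely when the associated $k\times k$ determinant is nonzero, so I would organize the verification according to how many of $\ba_{n+1},\ba_{n+2}$ appear among the $k$ selected columns.

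In two of these cases no condition arises. If both $\ba_{n+1}$ and $\ba_{n+2}$ are selected, the remaining $k-2$ columns are $\ba_i$ for $i$ in some $J\subseteq\{1,\dots,n\}$ with $|J|=k-2$; expanding the determinant along $\ba_{n+1}$ and then along $\ba_{n+2}$ (each step through the unique surviving entry) collapses it, up to sign and the factor $\prod_{i\in J}v_i$, to the ordinary $(k-2)\times(k-2)$ Vandermonde on exponents $0,\dots,k-3$, which is nonzero because the $\alpha_i$ are distinct. If only $\ba_{n+1}$ is selected, together with $\ba_i$, $i\in I$, $|I|=k-1$, expansion along $\ba_{n+1}$ leaves the ordinary $(k-1)\times(k-1)$ Vandermonde on exponents $0,\dots,k-2$, again always nonzero.

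The remaining two cases yield conditions (1) and (2). If none of $\ba_{n+1},\ba_{n+2}$ is selected, the $k$ columns are $\ba_i$, $i\in K$, $|K|=k$, and their determinant equals, up to the nonzero factor $\prod_{i\in K}v_i$, the generalized Vandermonde on the exponent set $\{0,1,\dots,k-2,k\}$; by Equation~\eqref{eq.det1} applied to $k$ variables, this is $\big(\sum_{i\in K}\alpha_i\big)\prod_{i<j\in K}(\alpha_j-\alpha_i)$, which is nonzero exactly when $\sum_{i\in K}\alpha_i\neq 0$, giving condition~(1). If only $\ba_{n+2}$ is selected, together with $\ba_i$, $i\in I$, $|I|=k-1$, a Laplace expansion along $\ba_{n+2}$ (entries $1$ in row $k-1$ and $\delta$ in row $k$) produces two $(k-1)\times(k-1)$ minors: deleting row $k-1$ yields the generalized Vandermonde on the exponent set $\{0,\dots,k-3,k\}$, which by the $(k-1)$-variable form of Equation~\eqref{eq.det2} equals $\big((\sum_{i\in I}\alpha_i)^2-\sum_{i<j\in I}\alpha_i\alpha_j\big)\prod_{i<j\in I}(\alpha_j-\alpha_i)$, while deleting row $k$ yields an ordinary Vandermonde; carrying through the cofactor signs, the full determinant factors as $\big(\delta-(\sum_{i\in I}\alpha_i)^2+\sum_{i<j\in I}\alpha_i\alpha_j\big)\cdot\prod_{i\in I}v_i\cdot\prod_{i<j\in I}(\alpha_j-\alpha_i)$, which is nonzero exactly when $(\sum_{i\in I}\alpha_i)^2-\sum_{i<j\in I}\alpha_i\alpha_j\neq\delta$, giving condition~(2). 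Combining the four cases shows that $\C_k({\bf A},{\bf v})$ is MDS if and only if (1) and (2) hold. (The bounds $k-2\ge 1$ and $k,k-1,k-2\le n$ are all compatible with the hypothesis $3\le k\le n$, so each case genuinely occurs; the small case $k=3$ of the $\{0,\dots,k-3,k\}$ determinant can be checked directly, or by noting that Equation~\eqref{eq.det2} also holds for two variables.)

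The step I expect to demand the most care is the subcase in which only $\ba_{n+2}$ is selected: one must recognize the minor obtained by deleting row $k-1$ as the generalized Vandermonde on exponents $\{0,1,\dots,k-3,k\}$, evaluate it via Equation~\eqref{eq.det2}, and then keep the two cofactor signs straight so that the contributions merge into the single clean factor $\delta-(\sum_{i\in I}\alpha_i)^2+\sum_{i<j\in I}\alpha_i\alpha_j$; a reassuring check is that this is exactly the scalar $w_{n+1}$ appearing in Theorem~\ref{th.extended code}. Everything else is routine Vandermonde bookkeeping.
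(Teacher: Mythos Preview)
Your proposal is correct and follows essentially the same approach as the paper: both invoke Lemma~\ref{lem.MDS} and split into the same four cases according to which of the last two columns are selected, using Equation~\eqref{eq.det1} for the case with neither and Equation~\eqref{eq.det2} for the case with only $\ba_{n+2}$. If anything you are slightly more careful than the paper, since you explicitly factor out the units $v_i$ and flag the $k=3$ edge case for Equation~\eqref{eq.det2}; the only quibble is that your ``reassuring check'' is a bit loose, as $w_{n+1}$ in Theorem~\ref{th.extended code} involves all $n$ evaluation points rather than just those in $I$, though the structural resemblance you are pointing to is real.
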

		
		\begin{proof} 
			The code $\C_k({\bf A}, {\bf v})$ is MDS if and only if the submatrix  consisting of any $k$ columns from
			$G_k$ in Equation (\ref{eq.generator matrix}) is nonsingular. Let ${\bf d}_1=(0,0,\cdots, 0,1)^T$  and  ${\bf d}_2=(0,0,\cdots,0,1, \delta)^T$. Then we divide the proof into the following cases:
			\begin{itemize}
				\item [] {\textbf{Case 1.}} Assume that the submatrix consisting of $k$ columns from $G_k$ does not contain ${\bf d}_1$ and  ${\bf d}_2$. Suppose that the submatrix contains any $k$ columns of the first $n$ columns from $G_k$ and note $K=\{i_1,i_2,\cdots,i_k\}\subseteq\{1,2,\cdots,n\}$. Then we need to compute the determinant of the submatrix:
				\begin{equation*}\label{eq.B_1}
					B_1=\left(\begin{array}{ccccc}
						1 & 1 & \cdots & 1\\ 
						\alpha_{i_1} & \alpha_{i_2} & \cdots & \alpha_{i_k}\\ 
						\vdots & \vdots & \ddots & \vdots \\ 
						\alpha_{i_1}^{k-3} & \alpha_{i_2}^{k-3} & \cdots & \alpha_{i_k}^{k-3}\\ 
						\alpha_{i_1}^{k-2} & \alpha_{i_2}^{k-2} & \cdots & \alpha_{i_k}^{k-2}\\
						\alpha_{i_1}^{k} & \alpha_{i_2}^{k} & \cdots & \alpha_{i_k}^{k}\\
					\end{array}\right)_{k\times k}.
				\end{equation*} 	
				Then $$\mbox{det}(B_1)=\sum_{i\in K}\alpha_i \prod_{i<j\in K}(\alpha_j-\alpha_i).$$
				Thus $\mbox{det}(B_1)\neq 0$ if and only if $\sum\limits_{i\in K} \alpha_i \neq 0.$
				Hence, the matrix is nonsigular if and only if $\sum\limits_{ i\in K}\alpha_i\neq 0.$
				
				\item [] {\textbf{Case 2.}} Assume that the submatrix consists only ${\bf d}_1$ and other $k-1$ columns from $G_k$. Suppose that the submatrix contains only ${\bf d}_1$ and any $k-1$ columns of the first $n$ columns from $G_k$ and note $I=\{i_1,i_2,\cdots,i_{k-1}\}\subseteq\{1,2,\cdots,n\}$. Then we need to compute the determinant of the submatrix:
				\begin{equation*}\label{eq.B_2}
					B_2=\left(\begin{array}{ccccc}
						1 & 1 & \cdots & 1 &0\\ 
						\alpha_{i_1} & \alpha_{i_2} & \cdots & \alpha_{i_{k-1}} & 0\\ 
						\vdots & \vdots & \ddots & \vdots & \vdots\\ 
						\alpha_{i_1}^{k-2} & \alpha_{i_2}^{k-2} & \cdots & \alpha_{i_{k-1}}^{k-2} & 0\\
						\alpha_{i_1}^{k} & \alpha_{i_2}^{k} & \cdots & \alpha_{i_{k-1}} ^{k}& 1\\
					\end{array}\right).
				\end{equation*} 	
				Then $$\mbox{det}(B_2)=\prod_{ i<j\in I}(\alpha_j-\alpha_i)\neq 0.$$
				Hence, the matrix is nonsigular.
				
				\item [] {\textbf{Case 3.}} Assume that the submatrix consists only ${\bf d}_2$ and other $k-1$ columns from $G_k$. Suppose that the submatrix contains only ${\bf d}_2$ and any $k-1$ columns of the first $n$ columns from $G_k$ and note $I=\{i_1,i_2,\cdots,i_{k-1}\}\subseteq\{1,2,\cdots,n\}$.  Then we need to compute the determinant of the submatrix:
				\begin{equation*}\label{eq.B_3}
					B_3=\left(\begin{array}{ccccc}
						1 & 1 & \cdots & 1 &0\\ 
						\alpha_{i_1} & \alpha_{i_2} & \cdots & \alpha_{i_{k-1}} & 0\\ 
						\vdots & \vdots & \ddots & \vdots & \vdots\\ 
						\alpha_{i_1}^{k-2} & \alpha_{i_2}^{k-2} & \cdots & \alpha_{i_{k-1}}^{k-2} & 1\\
						\alpha_{i_1}^{k} & \alpha_{i_2}^{k} & \cdots & \alpha_{i_{k-1}}^{k} & \delta\\
					\end{array}\right).
				\end{equation*} 	
				Then by equation \eqref{eq.det2}, we have
				\begin{eqnarray*}
					\mbox{det}(B_3)&=&(-1)^{k-1+k}\left|\begin{array}{cccccc}
						1 & 1 & \cdots & 1\\ 
						\alpha_{i_1} & \alpha_{i_2} & \cdots & \alpha_{i_{k-1}}\\ 
						\vdots & \vdots & \ddots & \vdots \\ 
						\alpha_{i_1}^{k-3} & \alpha_{i_2}^{k-3} & \cdots & \alpha_{i_{k-1}}^{k-3}\\ 
						\alpha_{i_1}^{k} &\alpha_{i_2}^{k} & \cdots & \alpha_{i_{k-1}}^{k}
					\end{array}\right| +\delta \prod_{i<j\in I}(\alpha_j-\alpha_i)\\								
					&=&-\bigg(\prod_{i<j\in I}(\alpha_j-\alpha_i)\bigg)\bigg((\sum_{i\in I}\alpha_i)^2-\sum_{i<j\in I}\alpha_i\alpha_j\bigg)+\delta \prod_{i<j\in I}(\alpha_j-\alpha_i).
				\end{eqnarray*}  	
				Hence, $$\mbox{det}(B_3)=\bigg(\delta-(\sum_{i\in I}\alpha_i)^2+\sum_{i<j\in I}\alpha_i\alpha_j\bigg)\prod_{i<j\in I}(\alpha_j-\alpha_i).$$
				Therefore, $\mbox{det}(B_3)\neq 0$ if and only if $\delta-(\sum\limits_{i\in I}\alpha_i)^2+\sum\limits_{i<j\in I}\alpha_i \alpha_j\neq 0.$
				Hence, the matrix is nonsigular if and only if $$\delta-(\sum_{i\in I}\alpha_i)^2+\sum_{i<j\in I}\alpha_i \alpha_j\neq 0.$$
				
				\item [] {\textbf{Case 4.}} Assume that the submatrix consists ${\bf d}_1$, ${\bf d}_2$ and other $k-2$ columns from $G_k$. Suppose that the submatrix contains ${\bf d}_1$, ${\bf d}_2$ and any $k-2$ columns of the first $n$ columns from $G_k$ and note $J=\{i_1,i_2,\cdots,i_{k-2}\}\subseteq\{1,2,\cdots,n\}$. Then we need to compute the determinant of the submatrix:
				\begin{equation*}\label{eq.B_4}
					B_4=\left(\begin{array}{ccccccccc}
						1  & 1 & \cdots & 1 & 0 & 0\\ 
						\alpha_{i_1} & \alpha_{i_2} & \cdots & \alpha_{i_{k-2}} & 0 & 0\\ 
						\vdots  & \vdots & \ddots & \vdots & \vdots & \vdots\\ 
						\alpha_{i_1}^{k-2} & \alpha_{i_2}^{k-2} & \cdots & \alpha_{i_{k-2}}^{k-2} & 0 & 1\\
						\alpha_{i_1}^{k} & \alpha_{i_2}^{k}& \cdots & \alpha_{i_{k-2}}^{k} & 1 & \delta\\
					\end{array}\right).
				\end{equation*} 	
				Then 
				\begin{eqnarray*}
					\mbox{det}(B_4)&=&(-1)^{k-1+k}\left|\begin{array}{ccccccccc}
						1  & 1 & \cdots & 1 & 0\\ 
						\alpha_{i_1} & \alpha_{i_2} & \cdots & \alpha_{i_{k-2}}  & 0\\ 
						\vdots & \vdots & \ddots & \vdots &\vdots\\ 
						\alpha_{i_1}^{k-3} & \alpha_{i_2}^{k-3} & \cdots & \alpha_{i_{k-2}}^{k-3} & 0 \\
						\alpha_{i_1}^{k-2}  & \alpha_{i_2}^{k-2} & \cdots & \alpha_{i_{k-2}}^{k-2} & 1\\
					\end{array}\right| \\
					&=&-\prod_{i<j\in J}(\alpha_j-\alpha_i)	\neq 0.								
				\end{eqnarray*}
				Hence, the matrix is nonsigular. 
			\end{itemize}
			
			This completes the proof.
		\end{proof}	
		
		\begin{example}   
			Let $q=4$ and $k=n=3$. Let $\omega$ be a primitive element of the finite filed ${\rm GF(4)}$, where $\omega$ satisfied $\omega^2+\omega+1=0$. Then ${\rm GF(4)}=\{0,1,\omega,1+\omega\}$. Let ${\bf A}=\{0,1,\omega\}$ and $\delta=\omega$, we can easily check that ${\bf A}$ and $\delta$ satisfy Theorem \ref{th.MDS condition}.
			Then we have $\C_k({\bf A},{\bf v})$ is an MDS code with parameters $[5,3,3]$ over ${\rm GF(4)}$ and its generator matrix is 
			\begin{equation*}\left(\begin{array}{cccccc} 
					1 & 1 & 1 & 0 & 0\\ 
					0 & 1 & \omega & 0 & 1\\ 
					0 & 1 & 1 & 1 & \omega\\ 
				\end{array}\right).
			\end{equation*} 
			Verified by Magma\cite{Magma}, these results are true.
		\end{example}
		
		\begin{example}   
			Let $q=8, k=3, n=4$. Let $\gamma$ be a primitive element of the finite filed ${\rm GF(8)}$, where $\gamma$ satisfied $\gamma^3+\gamma+1=0$. Then ${\rm GF(8)}=\{0,1,\gamma,\gamma^2,1+\gamma,\gamma+\gamma^2,1+\gamma+\gamma^2,1+\gamma^2\}$. Let ${\bf A}=\{1,\gamma,\gamma^2,\gamma^5\}$ and $\delta=\gamma^4$, we can easily check that ${\bf A}$ and $\delta$ satisfy Theorem \ref{th.MDS condition}.
			Then we have $\C_k({\bf A},{\bf v})$ is an MDS code with parameters $[6,3,4]$ over ${\rm GF(5)}$ and its generator matrix is 
			\begin{equation*}\left(\begin{array}{cccccc} 
					1 & 1 & 1 & 1 & 0 & 0\\ 
					1 & \gamma & \gamma^2 & \gamma^5 & 0 & 1\\ 
					1 & \gamma^3 & \gamma & \gamma & 1 & \gamma^4\\
				\end{array}\right).
			\end{equation*}
			Verified by Magma\cite{Magma}, these results are true.
		\end{example}

		\subsection{Non-GRS property of $\C_k({\bf A},{\bf v})$}
		
		In this subsection, we determine the non-GRS property of $\C_k({\bf A},{\bf v})$.
		
		\begin{theorem}\label{th.non-GRS}
			Let $\C_k({\bf A}, {\bf v})$ be the $q$-ary linear code generated by $G_k$ given in Equation (\ref{eq.generator matrix}). 
			Then $\C_k({\bf A}, {\bf v})$ is non-GRS. 
		\end{theorem}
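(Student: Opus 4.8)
The plan is to argue by contradiction via the Schur square, using that it is a monomial–equivalence invariant in the sense that $\dim(\C^2)$ and $d(\C^2)$ depend only on the monomial–equivalence class of $\C$ (if $\C'$ is obtained from $\C$ by a coordinate permutation and a diagonal scaling, then $\C'\star\C'$ is obtained from $\C\star\C$ by the same permutation and the corresponding squared scaling). So, suppose $\C=\C_k({\bf A},{\bf v})$ were monomially equivalent to an $[n+2,k]_q$ GRS code. In the range $3\le k<\tfrac{n+3}{2}$, Lemma~\ref{lem.GRS square dimension} then forces $\dim(\C^2)=2k-1$ and Lemma~\ref{lem.GRS square distance} forces $d(\C^2)\ge 2$; in the complementary range $k\ge\tfrac{n+3}{2}$ I would apply the same two lemmas to $\C^\perp$, which is GRS exactly when $\C$ is and whose explicit generator matrix $H_{n-k+2}$ is given in Theorem~\ref{th.parity-check matrix} (note $\C^\perp$ then has $n+2-k<\tfrac{n+3}{2}$, the regime where the lemmas apply). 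Thus it suffices to contradict one of these two numerical conclusions.

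The main step is a direct computation of $\C^2$ from the rows $\mathbf{g}_1,\dots,\mathbf{g}_k$ of $G_k$. I would record each $\mathbf{g}_s$ as a ``polynomial part'' $(v_i\alpha_i^{e})_{i=1}^{n}$ together with its last two coordinates, and then list the $\binom{k+1}{2}$ products $\mathbf{g}_s\star\mathbf{g}_t$. Two features emerge. First, the polynomial parts of these products run over exactly the exponent set $\{0,1,\dots,2k-2\}\cup\{2k\}$, of cardinality $2k$ — the degree $2k-1$ being unreachable precisely because the row of degree $k-1$ was deleted from $G_k$. Second, all products except $\mathbf{g}_{k-1}\star\mathbf{g}_{k-1}$, $\mathbf{g}_{k-1}\star\mathbf{g}_k$, $\mathbf{g}_k\star\mathbf{g}_k$ vanish in the last two coordinates, while these three carry the independent data $(0,1)$, $(0,\delta)$, $(1,\delta^2)$ there. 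A generalized–Vandermonde determinant count (in the spirit of the identities collected in Section~\ref{sec2.preliminaries}) then yields $\dim(\C^2)\ge 2k>2k-1$, contradicting Lemma~\ref{lem.GRS square dimension}. Most transparently, when $k\ge 4$ one has the explicit witness
\[
\mathbf{g}_{k-1}\star\mathbf{g}_{k-1}-\mathbf{g}_{k-3}\star\mathbf{g}_{k}=(0,\dots,0,1),
\]
since both Schur products have polynomial part $(v_i^2\alpha_i^{2k-4})_{i}$ and the same $(n+1)$-st coordinate $0$; this shows $d(\C^2)=1$, contradicting Lemma~\ref{lem.GRS square distance}. The high–rate case is symmetric: from $H_{n-k+2}$ the products $\mathbf{h}_{m-1}\star\mathbf{h}_{m-1}$ and $\mathbf{h}_{m-2}\star\mathbf{h}_{m}$ (with $m=n-k+2$) again have equal polynomial parts and differ only in the $(n+1)$-st coordinate, by the amount $\delta+\sum_{1\le i<j\le n}\alpha_i\alpha_j$, giving a weight–one word in $(\C^\perp)^2$.

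I expect the real work to be organizational rather than conceptual: one must verify the non-vanishing of the generalized–Vandermonde minors needed to push $\dim(\C^2)$ up to $2k$ across the admissible range of $n$ (including $n=2k-2$, where the last two coordinates are essential to the count), and one must handle the few corners in which the simplest witnesses degenerate — the value $k=3$, where the row $\mathbf{g}_{k-3}$ is absent, and the locus $\delta=-\sum_{1\le i<j\le n}\alpha_i\alpha_j$, on which the dual weight–one witness collapses. These should be settled by an elementary separate analysis (a different choice of two Schur products, or a direct appeal to the MDS characterization in Theorem~\ref{th.MDS condition}), but keeping straight which parameter regime forces which contradiction is where the care is required.
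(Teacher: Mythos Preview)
Your proposal is correct and follows essentially the same route as the paper: reduce to ${\bf v}={\bf 1}$, split according to whether $k$ is small or large relative to $n$, compute $\C^2$ (respectively $(\C^\perp)^2$) from the rows of $G_k$ (respectively $H_{n-k+2}$), and contradict Lemma~\ref{lem.GRS square dimension} in the primal case and Lemma~\ref{lem.GRS square distance} in the dual case. The paper carries out the full span computation in each case---obtaining in Case~1 exactly your list of $2k$ generators with exponent set $\{0,\dots,2k-5,2k-4,2k-3,2k-2,2k\}$ and verifying their independence by a Vandermonde minor, and in Case~2 reducing the span of $(\C^\perp)^2$ until the explicit weight-one vector $(0,\dots,0,1,0)$ appears---whereas you additionally isolate shortcut witnesses such as $\mathbf{g}_{k-1}\star\mathbf{g}_{k-1}-\mathbf{g}_{k-3}\star\mathbf{g}_k=(0,\dots,0,1)$; this is a pleasant simplification for $k\ge 4$ but, as you note, your dual witness $\mathbf{h}_{m-1}\star\mathbf{h}_{m-1}-\mathbf{h}_{m-2}\star\mathbf{h}_m$ degenerates on the locus $\delta=-\sum_{i<j}\alpha_i\alpha_j$ (and also when $m=n-k+2\le 2$), which is exactly why the paper instead produces $(0,\dots,0,1,0)$ via the full span reduction, using products of the ``plain'' rows $({\bf h}^r,0,0)$ against the last three rows to first place $({\bf u}\star{\bf h}^{2n-2k-2},0,0)$ in the span and then subtract it from $({\bf u}\star{\bf h}^{2n-2k-2},1,0)$.
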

		\begin{proof}
			Up to equivalence, it is sufficient to consider the case where ${\bf v}={\bf 1}$. 
			We have two cases.
			\begin{enumerate}
				\item [] {\textbf{Case 1.}} If $3\leq k<\frac{n+2}{2}$, we need to compute the dimension of $\C_k({\bf A}, {\bf 1})^2$ by Lemma \ref{lem.GRS square dimension}.	
				
				Denote ${\bf a}^z=(\alpha_1^z, \alpha_2^z,\cdots,\alpha_n^z)$, for any nonnegative integer $z$, and $\C_{k}({\bf A}, {\bf 1})$=$\langle({\bf a}^i,0,0),$ $ ({\bf a}^{k-2},0,1), ({\bf a}^k,1,\delta)\rangle$ 
				for integer $0\leq i\leq k-3$. 
				By definitions, we have 
				\begin{align}\label{eq. square}
					\begin{split}
						 \C_k({\bf A},{\bf 1})^2= & \C_k({\bf A},{\bf 1})\star \C_k({\bf A},{\bf 1}) \\ 
						= & \langle ({\bf a}^i,0,0)\star ({\bf a}^j,0,0), ({\bf a}^i,0,0)\star ({\bf a}^{k-2},0,1), ({\bf a}^i,0,0)\star ({\bf a}^k,1,\delta), \\
						& ({\bf a}^{k-2},0,1)\star ({\bf a}^j,0,0), ({\bf a}^{k-2},0,1)\star ({\bf a}^{k-2},0,1), ({\bf a}^{k-2},0,1)\star ({\bf a}^{k},1,\delta), \\ 
						&({\bf a}^{k},1,\delta)\star ({\bf a}^j,0,0), ({\bf a}^{k},1,\delta)\star ({\bf a}^{k-2},0,1), ({\bf a}^{k},1,\delta)\star ({\bf a}^{k},1,\delta):~0\leq i,j\leq k-3 \rangle \\ 
						= & \langle ({\bf a}^{i+j},0,0), ({\bf a}^{i+k-2},0,0), ({\bf a}^{i+k},0,0), ({\bf a}^{j+k-2},0,0), ({\bf a}^{2k-4},0,1), ({\bf a}^{2k-2},0,\delta),\\
						&({\bf a}^{j+k},0,0), ({\bf a}^{2k-2},0,\delta), ({\bf a}^{2k},1,\delta^2):~0\leq i,j\leq k-3 \rangle \\ 
						= & \langle ({\bf a}^{u},0,0), ({\bf a}^{2k-4},0,1), ({\bf a}^{2k-2},0,\delta), ({\bf a}^{2k},1,\delta^2):
						~u=0,1,2,\cdots,2k-5,2k-3 \rangle. 
					\end{split}  	  
				\end{align}
				Next, we can prove that $({\bf a}^{2k-4},0,1)$, $({\bf a}^{2k-2},0,\delta)$, $({\bf a}^{2k},1,\delta^2)$ and $({\bf a}^{u},0,0)$ are linearly independent, where $u=0,1,2,\cdots,2k-5,2k-3$, ${\bf a}^z=(\alpha_1^z, \alpha_2^z,\cdots,\alpha_n^z)$, for any nonnegative integer $z$.
				The matrix composed of these $2k$ vectors denoted by $A_{2k}$ is 
				\begin{equation*}\label{eq.matrixA2k}
					A_{2k}=\left(\begin{array}{ccccccc} 
						\alpha_1^0 & \alpha_2^0 & \cdots  & \alpha_n^0 &0 &0\\ 
						\alpha_1^1 & \alpha_2^1 & \cdots & \alpha_{n}^1 &0 &0\\
						\vdots & \vdots & \ddots & \vdots & \vdots & \vdots \\ 
						\alpha_1^{2k-5} & \alpha_2^{2k-5} & \cdots & \alpha_{n}^{2k-5} &0 &0\\
						\alpha_1^{2k-4} & \alpha_2^{2k-4} & \cdots & \alpha_{n}^{2k-4} &0 &1\\
						\alpha_1^{2k-3} & \alpha_2^{2k-3} & \cdots & \alpha_{n}^{2k-3} &0 &0\\
						\alpha_1^{2k-2} & \alpha_2^{2k-2} & \cdots & \alpha_{n}^{2k-2} &0 &\delta\\
						\alpha_1^{2k} & \alpha_2^{2k} & \cdots & \alpha_{n}^{2k} &1 &\delta\end{array}\right)_{{2k}\times{(n+2)}}.
				\end{equation*}
				Since $3\leq k<\frac{n+2}{2}$, then there exists a ${2k}\times {2k}$ submatrix $A_{2k}'$ consisting of the first $2k-1$ columns and the $(n+1)$-th column with the form of \begin{equation*}\label{eq.matrixA2k1}
					\left(\begin{array}{ccccccc} 
						\alpha_1^0 & \alpha_2^0 & \cdots  & \alpha_{2k-1}^0 &0 \\ 
						\alpha_1^1 & \alpha_2^1 & \cdots & \alpha_{2k-1}^1 &0 \\
						\vdots & \vdots & \ddots & \vdots & \vdots  \\ 
						\alpha_1^{2k-5} & \alpha_2^{2k-5} & \cdots & \alpha_{2k-1}^{2k-5} &0  \\
						\alpha_1^{2k-4} & \alpha_2^{2k-4} & \cdots & \alpha_{2k-1}^{2k-4} &0  \\
						\alpha_1^{2k-3} & \alpha_2^{2k-3} & \cdots & \alpha_{2k-1}^{2k-3} &0  \\
						\alpha_1^{2k-2} & \alpha_2^{2k-2} & \cdots & \alpha_{2k-1}^{2k-2} &0  \\
						\alpha_1^{2k} & \alpha_2^{2k} & \cdots & \alpha_{2k-1}^{2k} &1 
						\end{array}\right)_{{2k}\times{2k}}.
				\end{equation*}
				The submatrix $A_{2k}'$ is of full rank, which can implied that the rank of $A_{2k}$ is $2k$.

				So it can be concluded that $({\bf a}^{2k-4},0,1)$, $({\bf a}^{2k-2},0,\delta)$, $({\bf a}^{2k},1,\delta^2)$ and $({\bf a}^{u},0,0)$ are linearly independent, where $u=0,1,2,\cdots,2k-5,2k-3$.
				It can be further implied that the dimension of $\C_k({\bf A},{\bf 1})^2$ is $2k$.
				
				Thus $\C_k({\bf A}, {\bf 1})$ is non-GRS.  
				
				\item [] {\textbf{Case 2.}}  If $k\geq \frac{n+2}{2}$, we can firstly determine the non-GRS property of $\C_k({\bf A}, {\bf 1})^{\perp}$ from Lemma \ref{lem.GRS square distance}.
				
				Denote ${\bf h}^r=(u_1\alpha_1^r, u_2\alpha_2^r,\cdots,u_n\alpha_n^r)$ for any $0\leq r\leq n-k-2$, and ${\bf u}=(u_1,u_2,\cdots,u_n)$. 
				
				Then we have $$\C_k({\bf A}, {\bf 1})^{\perp}= \langle ({\bf h}^r, 0, 0), ({\bf h}^{n-k-1}, -1, 0), ({\bf h}^{n-k}, a, 0), ({\bf h}^{n-k+1}, b, -1) \rangle,$$
				where $a=-\sum\limits_{i=1}^{n}\alpha_i \mbox{ and } b=\delta+\sum\limits_{1\leq i<j\leq n}\alpha_i\alpha_j-(\sum\limits_{i=1}^{n}\alpha_i)^2.$
				By definitions and Theorem \ref{th.parity-check matrix}, we have 
				\begin{align}\label{eq. square dual}
					\small
					\begin{split}
						 (\C_k({\bf A}, {\bf 1})^{\perp})^2=&\C_k({\bf A}, {\bf 1})^{\perp}\star \C_k({\bf A}, {\bf 1})^{\perp}\\ 
						= &  \langle ({\bf h}^r,0,0)\star ({\bf h}^w,0,0),~ ({\bf h}^r,0,0)\star ({\bf h}^{n-k-1},-1,0),~ ({\bf h}^r,0,0)\star ({\bf h}^{n-k},a,0), \\
						& ({\bf h}^r,0,0)\star ({\bf h}^{n-k+1},b,-1),~ ({\bf h}^{n-k-1},-1,0)\star ({\bf h}^w,0,0),~ ({\bf h}^{n-k-1},-1,0)\star ({\bf h}^{n-k-1},-1,0), \\ 
						& ({\bf h}^{n-k-1},-1,0)\star ({\bf h}^{n-k},a,0),~ ({\bf h}^{n-k-1},-1,0)\star ({\bf h}^{n-k+1},b,-1),~ ({\bf h}^{n-k},a,0)\star ({\bf h}^w,0,0),\\
						& ({\bf h}^{n-k},a,0)\star ({\bf h}^{n-k-1},-1,0),({\bf h}^{n-k},a,0)\star ({\bf h}^{n-k},a,0),~({\bf h}^{n-k},a,0)\star ({\bf h}^{n-k+1},b,-1), \\
						& ({\bf h}^{n-k+1},b,-1)\star ({\bf h}^w,0,0),~ ({\bf h}^{n-k+1},b,-1)\star ({\bf h}^{n-k-1},-1,0),  \\    
						& ({\bf h}^{n-k+1},b,-1)\star ({\bf h}^{n-k},a,0), ({\bf h}^{n-k+1},b,-1)\star ({\bf h}^{n-k+1},b,-1):~0\leq r,w\leq n-k-2 \rangle \\
						= & \langle ({\bf u}\star {\bf h}^z,0,0),~({\bf u}\star {\bf h}^{2n-2k-2},1,0),~ ({\bf u}\star {\bf h}^{2n-2k-1},-a,0),~ ({\bf u}\star {\bf h}^{2n-2k},-b,0), \\
						& ({\bf u}\star {\bf h}^{2n-2k},a^2,0),~({\bf u}\star {\bf h}^{2n-2k+1},-b,0),~ ({\bf u}\star {\bf h}^{2n-2k+1},ab,0),\\
						& ({\bf u}\star {\bf h}^{2n-2k+2},b^2,1):~ 0\leq z\leq 2n-2k-2\rangle \\
						= & \langle ({\bf u}\star {\bf h}^z,0,0),~ (\underbrace{0,0,\cdots,0}_n,1,0),~ ({\bf u}\star {\bf h}^{2n-2k+2},0,1):~ 0\leq z\leq 2n-2k+1 \rangle.
					\end{split}
				\end{align}	
				Since  $k\geq \frac{n+2}{2}$, then $2n-2k-1\leq n-3$.    
				By Equation (\ref{eq. square dual}), $$d((\C_k({\bf A}, {\bf 1})^{\perp})^2)=1<2.$$
				Thus $\C_k({\bf A}, {\bf 1})^{\perp}$ is non-GRS.
				
				Recall that the dual code of a GRS code is still a GRS code. Therefore, $\C_k({\bf A}, {\bf 1})$ is non-GRS.
			\end{enumerate}
			
			This completes the proof.
		\end{proof}

		\section{NMDS codes $\C_k({\bf A},{\bf v})$}\label{sec5}

		In this section, our focus will be on searching for some necessary and sufficient conditions under which the code $\C_{k}({\bf A}, {\bf v})$ is NMDS.
		
		\subsection{When is the code $\C_k({\bf A},{\bf v})^\perp$ AMDS?}
		In this subsection, we look for necessary and sufficient conditions under which the dual code $\C_{k}({\bf A}, {\bf v})^\perp$  is AMDS. Since $G_k$ is a parity-check matrix of $\C_{k}({\bf A},{\bf v})^\perp$, we have the following results.
		
		\begin{theorem}\label{th.dAMDS condition}
			The code $\C_k({\bf A},{\bf v})^\perp$ is AMDS if and only if one of the following conditions holds:
			\begin{enumerate}
				\item [\rm (1)] There exists a subset $L\subseteq \{1,\cdots, n\}$ with size $k$ such that $$\sum_{i\in L}\alpha_i=0.$$
				
				\item [\rm (2)] There exists a subset $M\subseteq \{1,2,\cdots, n\}$ with size $k-1$ such that $$ \delta = (\sum_{ i\in M}\alpha_i)^2-\sum_{ i<j\in M}\alpha_i\alpha_j.$$
			\end{enumerate}
		\end{theorem}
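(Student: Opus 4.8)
The plan is to use the fact, already noted above, that $G_k$ is a parity-check matrix of $\C_k({\bf A},{\bf v})^\perp$. Hence the minimum distance $d^\perp$ of $\C_k({\bf A},{\bf v})^\perp$ equals the smallest number of linearly dependent columns of $G_k$. By Theorem \ref{th.length and dimension}, $\C_k({\bf A},{\bf v})^\perp$ has parameters $[n+2,\,n-k+2,\,d^\perp]_q$, so the Singleton bound gives $d^\perp\le k+1$, and $\C_k({\bf A},{\bf v})^\perp$ is AMDS exactly when $d^\perp=k$. Thus the statement to prove splits into: (i) every $k-1$ columns of $G_k$ are linearly independent (so $d^\perp\ge k$ always), and (ii) some $k$ columns of $G_k$ are linearly dependent, with (ii) to be identified with conditions (1)--(2) of the theorem.

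For (ii) there is nothing new to do. By Lemma \ref{lem.MDS}, ``some $k$ columns of $G_k$ are linearly dependent'' is precisely the negation of ``$\C_k({\bf A},{\bf v})$ is MDS'', and by Theorem \ref{th.MDS condition} the latter holds iff both of the conditions (1) and (2) stated there hold. Therefore ``some $k$ columns of $G_k$ are linearly dependent'' is equivalent to ``condition (1) or condition (2) of the present theorem holds'', the two conditions here being literally the negations of the two conditions in Theorem \ref{th.MDS condition}. So the entire proof reduces to establishing (i); equivalently, $\C_k({\bf A},{\bf v})^\perp$ is AMDS if and only if $\C_k({\bf A},{\bf v})$ fails to be MDS.

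To prove (i), I would show that any $k-1$ columns of $G_k$ are linearly independent, splitting into four cases according to how many of the two special columns ${\bf d}_1=(0,\dots,0,1)^T$ and ${\bf d}_2=(0,\dots,0,1,\delta)^T$ occur among the chosen columns: neither, only ${\bf d}_1$, only ${\bf d}_2$, or both. In each case, take a vanishing $\F_q$-linear combination of the chosen columns and restrict it to the rows of $G_k$ carrying the lowest exponents, which are exactly the rows on which ${\bf d}_1$ and ${\bf d}_2$ are zero (the special entries $1,\delta$ of ${\bf d}_1,{\bf d}_2$ sit in the rows of exponents $k-2$ and $k$). On those rows the equation becomes a square Vandermonde system in the coefficients attached to the polynomial columns $v_i(1,\alpha_i,\dots,\alpha_i^{k-2},\alpha_i^{k})^T$; since the $v_i$ are nonzero and the $\alpha_i$ are distinct, this forces all those coefficients to be $0$. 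Then the exponent-$(k-2)$ row of the equation kills the coefficient of ${\bf d}_2$, and the exponent-$k$ row kills the coefficient of ${\bf d}_1$. When $k=3$ there may be no polynomial columns at all, and the argument is even shorter. This yields (i), which together with (ii) finishes the proof.

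The one point requiring care is the choice of rows in the case where both ${\bf d}_1$ and ${\bf d}_2$ are selected together with $k-3$ polynomial columns: one must single out the rows of exponents $0,1,\dots,k-4$, so that a genuine Vandermonde matrix appears. A more naive row deletion can instead leave a generalized Vandermonde matrix (for instance with exponents $1,\dots,k-3$) whose determinant may vanish — e.g.\ when $0\in{\bf A}$ is among the chosen evaluation points, or when two chosen points are negatives of one another. Apart from this, step (i) is routine rank bookkeeping, and step (ii) is immediate from the already-proved characterisation in Theorem \ref{th.MDS condition}.
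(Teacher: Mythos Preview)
Your proposal is correct and follows essentially the same approach as the paper: both reduce the AMDS characterization to (i) showing that any $k-1$ columns of $G_k$ are independent via the same four-case split on how many of ${\bf d}_1,{\bf d}_2$ are selected, and (ii) invoking the negation of Theorem~\ref{th.MDS condition} for the existence of $k$ dependent columns. Your handling of the case with both special columns is in fact a bit cleaner than the paper's --- restricting to the rows of exponents $0,1,\dots,k-4$ yields a genuine Vandermonde directly, whereas the paper deletes a single row and then splits into subcases according to whether $0$ lies among the chosen $\alpha_{i_j}$.
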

		
		\begin{proof}
			The code $\C_k({\bf A},{\bf v})^\perp$ is  AMDS if and only if 
			the followings hold:
			\begin{itemize}
				\item  Any $k-1$ columns from $G_k$ in Equation \eqref{eq.generator matrix} are linearly independent.
				
				\item  There exists $k$ linearly dependent columns in $G_k$  in Equation \eqref{eq.generator matrix}.
			\end{itemize}
			
			
			Let ${\bf d}_1=(0,\cdots, 0,1)^T$ and ${\bf d}_2=(0,\cdots,0,1, \delta)^T$. Since any $k-1$ columns from $G_k$ in Equation \eqref{eq.generator matrix} are linearly independent if and only if the rank of each $k\times(k-1)$ submatrix of $G_k$ is $k-1$. So we need to calculate the conditions required for the rank of each $k\times(k-1)$ submatrix of $G_k$ to be equal to $k-1$. Then we can divide the proof into the following cases.
				\begin{enumerate}
				\item [] {\textbf{Case 1.}}  Assume that the $k\times(k-1)$ submatrix of $G_k$ consisting of $k-1$ columns of $G_k$ does not contain ${\bf d}_1$ and ${\bf d}_2$ . 
			    Suppose that the submatrix contains only any $k-1$ columns of the first $n$ columns from $G_k$ and note $I=\{i_1,i_2,\cdots,i_{k-1}\}\subseteq\{1,2,\cdots,n\}$.  We need to determine the necessary and sufficient conditions for the rank of $k\times(k-1)$ submatrix $D_1$ of $G_k$ to be $k-1$, where 
					\begin{equation*}\label{detD_1}
						D_1=\left(\begin{array}{ccccc}
							1 &1 &\cdots  &1\\ 
							\alpha_{i_1} &\alpha_{i_2} &\cdots & \alpha_{i_{k-1}}\\ 
							\vdots &\vdots &\ddots &\vdots \\ 
							\alpha_{i_1}^{k-3} &\alpha_{i_2}^{k-3} &\cdots & \alpha_{i_{k-1}}^{k-3}\\ 
							\alpha_{i_1}^{k-2} &\alpha_{i_2}^{k-2} &\cdots & \alpha_{i_{k-1}}^{k-2}\\
							\alpha_{i_1}^{k} &\alpha_{i_2}^{k} &\cdots & \alpha_{i_{k-1}}^{k}\\
						\end{array}\right)_{k\times(k-1)} .
					\end{equation*} 
					It is easy to see that any $k-1$ rows of $D_1$ is nonsingular and then the rank of matrix $D_1$ is $k-1$, equivalently, the $k-1$ columns of matrix $D_1$ are linearly independent.
				\item [] {\textbf{Case 2.}} Assume that the $k\times(k-1)$ submatrix of $G_k$ contains only ${\bf d}_1$ and other $k-2$ columns of $G_k$.
					Suppose that the submatrix contains only ${\bf d}_1$ and any $k-2$ columns of the first $n$ columns from $G_k$ and note $J=\{i_1,i_2,\cdots,i_{k-2}\}\subseteq\{1,2,\cdots,n\}$. We need to determine the necessary and sufficient conditions for the rank of $k\times(k-1)$ submatrix $D_2$ of $G_k$ to be $k-1$, where 
					\begin{equation*}\label{detD_2}
						D_2=\left(\begin{array}{ccccc}
							1 &1 &\cdots  &1 &0\\ 
							\alpha_{i_1} &\alpha_{i_2} &\cdots & \alpha_{i_{k-2}}&0\\ 
							\vdots &\vdots &\ddots &\vdots &\vdots \\ 
							\alpha_{i_1}^{k-3} &\alpha_{i_2}^{k-3} &\cdots & \alpha_{i_{k-2}}^{k-3}&0\\ 
							\alpha_{i_1}^{k-2} &\alpha_{i_2}^{k-2} &\cdots & \alpha_{i_{k-2}}^{k-2}&0\\
							\alpha_{i_1}^{k} &\alpha_{i_2}^{k} &\cdots & \alpha_{i_{k-2}}^{k}&1\\
						\end{array}\right)_{k\times(k-1)}.
					\end{equation*} 
					Deleting the penultimate row of the matrix $D_2$, then results in a $(k-1)\times(k-1)$ square matrix
					\begin{equation*}\label{detD_21}
						D_2'=\left(\begin{array}{ccccc}
							1 &1 &\cdots  &1 &0\\ 
							\alpha_{i_1} &\alpha_{i_2} &\cdots & \alpha_{i_{k-2}}&0\\ 
							\vdots &\vdots &\ddots &\vdots &\vdots \\ 
							\alpha_{i_1}^{k-3} &\alpha_{i_2}^{k-3} &\cdots & \alpha_{i_{k-2}}^{k-3}&0\\ 
							\alpha_{i_1}^{k} &\alpha_{i_2}^{k} &\cdots & \alpha_{i_{k-2}}^{k}&1\\
						\end{array}\right)_{(k-1)\times(k-1)}.
					\end{equation*} 
					It is easy to see that 
					$$\mbox{det}(D_2')=\prod_{i<j\in J}(\alpha_j-\alpha_i)\neq 0.$$
					Thus $D_2'$ is nonsingular and then the rank of matrix $D_2$ is $k-1$, equivalently, the $k-1$ columns of matrix $D_2$ are linearly independent.
				\item [] {\textbf{Case 3.}} Assume that the $k\times(k-1)$ submatrix of $G_k$ contains only ${\bf d}_2$ and other $k-2$ columns of $G_k$.
					Suppose that the submatrix contains only ${\bf d}_2$ and any $k-2$ columns of the first $n$ columns from $G_k$ and note $J=\{i_1,i_2,\cdots,i_{k-2}\}\subseteq\{1,2,\cdots,n\}$. We need to determine the necessary and sufficient conditions for the rank of $k\times(k-1)$ submatrix $D_3$ of $G_k$ to be $k-1$, where 
					\begin{equation*}\label{detD_3}
						D_3=\left(\begin{array}{ccccc}
							1 &1 &\cdots  &1 &0\\ 
							\alpha_{i_1} &\alpha_{i_2} &\cdots & \alpha_{i_{k-2}}&0\\ 
							\vdots &\vdots &\ddots &\vdots &\vdots \\ 
							\alpha_{i_1}^{k-3} &\alpha_{i_2}^{k-3} &\cdots & \alpha_{i_{k-2}}^{k-3}&0\\ 
							\alpha_{i_1}^{k-2} &\alpha_{i_2}^{k-2} &\cdots & \alpha_{i_{k-2}}^{k-2}&1\\
							\alpha_{i_1}^{k} &\alpha_{i_2}^{k} &\cdots & \alpha_{i_{k-2}}^{k}&\delta\\
						\end{array}\right)_{k\times(k-1)}.
					\end{equation*} 
					It is easy to see that the first $k-1$ rows of $D_3$ is nonsingular, and then the rank of matrix $D_3$ is $k-1$, equivalently, the $k-1$ columns of matrix $D_3$ are linearly independent.
				\item [] {\textbf{Case 4.}} Assume that the $k\times(k-1)$ submatrix of $G_k$ contains ${\bf d}_1$, ${\bf d}_2$ and other $k-3$ columns of $G_k$.
					Suppose that the submatrix contains ${\bf d}_1$, ${\bf d}_2$ and any $k-3$ columns of the first $n$ columns from $G_k$ and note $P=\{i_1,i_2,\cdots,i_{k-3}\}\subseteq\{1,2,\cdots,n\}$. We need to determine the necessary and sufficient conditions for the rank of $k\times(k-1)$ submatrix $D_4$ of $G_k$ to be $k-1$, where 
					\begin{equation*}\label{detD_4}
						D_4=\left(\begin{array}{cccccccc}
							1 &1 &\cdots  &1 &0 &0\\ 
							\alpha_{i_1} &\alpha_{i_2} &\cdots & \alpha_{i_{k-3}} &0 &0\\ 
							\vdots &\vdots &\ddots &\vdots &\vdots &\vdots\\ 
							\alpha_{i_1}^{k-3} &\alpha_{i_2}^{k-3} &\cdots & \alpha_{i_{k-3}}^{k-3}&0 &0\\ 
							\alpha_{i_1}^{k-2} &\alpha_{i_2}^{k-2} &\cdots & \alpha_{i_{k-3}}^{k-2} &0 &1\\
							\alpha_{i_1}^{k} &\alpha_{i_2}^{k} &\cdots & \alpha_{i_{k-3}}^{k}&1 &\delta\\
						\end{array}\right)_{k\times(k-1)}.
					\end{equation*}
					\begin{enumerate}
						\item []{\textbf{Subcase 4.1}} When $\alpha_{i_j}\neq 0$, for any $i_j\in P$, then we can delete the first row of the matrix $D_4$, then results in a $(k-1)\times(k-1)$ square matrix
						\begin{equation*}\label{detD_41}
							D_4'=\left(\begin{array}{ccccccccc}
								\alpha_{i_1} &\alpha_{i_2} &\cdots & \alpha_{i_{k-3}} &0 &0\\ 
								\vdots &\vdots &\ddots &\vdots &\vdots &\vdots\\ 
								\alpha_{i_1}^{k-3} &\alpha_{i_2}^{k-3} &\cdots & \alpha_{i_{k-3}}^{k-3}&0 &0\\ 
								\alpha_{i_1}^{k-2} &\alpha_{i_2}^{k-2} &\cdots & \alpha_{i_{k-3}}^{k-2} &0 &1\\
								\alpha_{i_1}^{k} &\alpha_{i_2}^{k} &\cdots & \alpha_{i_{k-3}}^{k}&1 &\delta\\
							\end{array}\right)_{(k-1)\times(k-1)}.
						\end{equation*} 
					It is easy to see that 
						\begin{equation*}
							\mbox{det}(D_4')=-\left|\begin{array}{ccccccccc}
							\alpha_{i_1} &\alpha_{i_2} &\cdots & \alpha_{i_{k-3}} &0 \\ 
							\vdots &\vdots &\ddots &\vdots &\vdots \\ 
							\alpha_{i_1}^{k-3} &\alpha_{i_2}^{k-3} &\cdots & \alpha_{i_{k-3}}^{k-3} &0\\ 
							\alpha_{i_1}^{k-2} &\alpha_{i_2}^{k-2} &\cdots & \alpha_{i_{k-3}}^{k-2}  &1\\
						\end{array}\right|\\
						=-\prod_{i\in P}\alpha_i\cdot\prod_{i<j\in P}(\alpha_j-\alpha_i)\neq 0.
						\end{equation*}
					Thus, the rank of matrix $D_4$ is $k-1$.
						\item []{\textbf{Subcase 4.2}} When there exists $i_j\in P$ such that $\alpha_{i_j}=0$,  then we can delete the second row of the matrix $D_4$, then results in a $(k-1)\times(k-1)$ square matrix
						\begin{equation*}\label{detD_42}
							D_4''=\left(\begin{array}{ccccccccc}
								1 &1 &\cdots &1 &\cdots& 1 & 0&0\\
								\alpha_1^2 &\alpha_2^2 &\cdots & 0 &\cdots & \alpha_{k-3}^2 & 0&0 \\ 
								\vdots &\vdots &\ddots &\vdots & \ddots & \vdots& \vdots &\vdots\\ 
								\alpha_1^{k-3} &\alpha_2^{k-3} &\cdots& 0 &\cdots & \alpha_{k-3}^{k-3} & 0 &0 \\
								\alpha_1^{k-2} &\alpha_2^{k-2} &\cdots & 0 &\cdots  & \alpha_{k-3}^{k-2} & 0 &1\\
								\alpha_1^{k} &\alpha_2^{k} &\cdots & 0 &\cdots  & \alpha_{k-3}^{k} & 1 & \delta\\
							\end{array}\right)_{(k-1)\times(k-1)}.
						\end{equation*}		
				It is easy to see that
						\begin{align*}
							\mbox{det}(D_4'')
							&=(-1)^{1+i_j}\left|\begin{array}{ccccccccc}
							\alpha_1^2 &\alpha_2^2 &\cdots& \alpha_{i_{j-1}}^2 & \alpha_{i_{j+1}}^2 &\cdots & \alpha_{k-3}^2 & 0 &0\\ 
							\vdots &\vdots &\ddots &\vdots & \vdots& \ddots & \vdots & \vdots &\vdots\\ 
							\alpha_1^{k-3} &\alpha_2^{k-3} &\cdots& \alpha_{i_{j-1}}^{k-3} & \alpha_{i_{j+1}}^{k-3} &\cdots & \alpha_{k-3}^{k-3} & 0 &0 \\
							\alpha_1^{k-2} &\alpha_2^{k-2} &\cdots & \alpha_{i_{j-1}}^{k-2} & \alpha_{i_{j+1}}^{k-2} &\cdots  & \alpha_{k-3}^{k-2} &0& 1\\
							\alpha_1^{k} &\alpha_2^{k} &\cdots & \alpha_{i_{j-1}}^{k} & \alpha_{i_{j+1}}^{k} &\cdots  & \alpha_{k-3}^{k} & 1 &\delta\\
						\end{array}\right| \\
						&=(-1)^{1+i_j+k-2+k-3}\left|\begin{array}{ccccccccc}
							\alpha_1^2 &\alpha_2^2 &\cdots& \alpha_{i_{j-1}}^2 & \alpha_{i_{j+1}}^2 &\cdots & \alpha_{k-3}^2 & 0 \\ 
							\vdots &\vdots &\ddots &\vdots & \vdots& \ddots & \vdots  &\vdots\\ 
							\alpha_1^{k-3} &\alpha_2^{k-3} &\cdots& \alpha_{i_{j-1}}^{k-3} & \alpha_{i_{j+1}}^{k-3} &\cdots & \alpha_{k-3}^{k-3}  &0 \\
							\alpha_1^{k-2} &\alpha_2^{k-2} &\cdots & \alpha_{i_{j-1}}^{k-2} & \alpha_{i_{j+1}}^{k-2} &\cdots  & \alpha_{k-3}^{k-2}& 1\\
						\end{array}\right| \\
					&=\left|\begin{array}{ccccccccc}
							\alpha_1^2 &\alpha_2^2 &\cdots& \alpha_{i_{j-1}}^2 & \alpha_{i_{j+1}}^2 &\cdots & \alpha_{k-3}^2 \\ 
							\vdots &\vdots &\ddots &\vdots & \vdots& \ddots & \vdots \\ 
							\alpha_1^{k-3} &\alpha_2^{k-3} &\cdots& \alpha_{i_{j-1}}^{k-3} & \alpha_{i_{j+1}}^{k-3} &\cdots & \alpha_{k-3}^{k-3} \\
						\end{array}\right| \\
					&=\prod\limits_{ i<j\in P,i,j\neq i_j}(\alpha_j-\alpha_i)\neq0.
						\end{align*}					
					Thus, the rank of matrix $D_4$ is $k-1$.
					Therefore, the rank of matrix $D_4$ is $k-1$, equivalently, the $k-1$ columns of matrix $D_4$ are linearly independent.
					\end{enumerate}
			\end{enumerate} 
			
			In a word, any $k-1$ columns  from $G_k$ in Equation \eqref{eq.generator matrix} are linearly independent.					
			Thus, the dual code $\C_k({\bf A}, {\bf v})^{\perp}$ has parameters $[n+2, n+2-k, \ge k]$ in this case. Then the desired result follows from Theorem \ref{th.MDS condition}.
						
			We have finished the whole proof.
		\end{proof}
			
		\subsection{When is the code $\C_k({\bf A},{\bf v})$ AMDS?}
		Based on Theorem \ref{th.parity-check matrix} and $G_k$ given in Equation \eqref{eq.generator matrix} is a generator matrix of $\C_k({\bf A}, {\bf v})$, we obtain the following theorem.
		
		\begin{theorem}\label{th.AMDS condition}  
			The code $\C_k({\bf A}, {\bf v})$ is AMDS if and only if one of the followings  holds:
			\begin{itemize}
				\item [\rm (1)] For any subset $I\subseteq \{1,2,\cdots,n\}$ with size $k+1$, there exists a  subset $J\subseteq I$ with size $k$, such that $$\sum\limits_{i\in J}\alpha_i\neq 0,$$ for any subset $I'\subseteq \{1,2,\cdots,n\}$ with size $k$, there exists a subset $J'\subseteq I'$ with size $k-1$, such that $$ \delta-(\sum\limits_{i\in J'}\alpha_i)^2+\sum\limits_{ i<j\in J'}\alpha_i\alpha_j\neq 0,$$
				and there exists a subset $L\subseteq \{1,\cdots, n\}$ with size $k$ such that $$\sum_{i\in L}\alpha_i=0,$$ 
				
				\item [\rm (2)] For any subset $I\subseteq \{1,2,\cdots,n\}$ with size $k+1$, there exists a  subset $J\subseteq I$ with size $k$, such that $$\sum\limits_{i\in J}\alpha_i\neq 0,$$ for any subset $I'\subseteq \{1,2,\cdots,n\}$ with size $k$, there exists a subset $J'\subseteq I'$ with size $k-1$, such that $$ \delta-(\sum\limits_{i\in J'}\alpha_i)^2+\sum\limits_{ i<j\in J'}\alpha_i\alpha_j\neq 0,$$
				and there exists a subset $M\subseteq \{1,2,\cdots, n\}$ with size $k-1$ such that $$ \delta = (\sum_{ i\in M}\alpha_i)^2-\sum_{ i<j\in M}\alpha_i\alpha_j.$$ 
			\end{itemize}
					
		\end{theorem}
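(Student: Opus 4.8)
The plan is to read the AMDS property of $\C_k({\bf A},{\bf v})$ off the columns of $G_k$ directly, recycling the sub-determinant computations from the proof of Theorem \ref{th.MDS condition}. Since $\C_k({\bf A},{\bf v})$ has parameters $[n+2,k]_q$, being AMDS means $d(\C_k({\bf A},{\bf v}))=n+2-k$. Arguing as in the reasoning behind Lemma \ref{lem.MDS} (which is the case $d=n+2-k+1$), a code with generator matrix $G_k$ satisfies $d\ge n+2-k$ if and only if every $k+1$ columns of $G_k$ have rank $k$; and $d\le n+2-k$ if and only if the code is not MDS, i.e. by Lemma \ref{lem.MDS} some $k$ columns of $G_k$ are linearly dependent. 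So I would first reduce the statement to the claim that $\C_k({\bf A},{\bf v})$ is AMDS if and only if (i) any $k+1$ columns of $G_k$ are of full rank and (ii) some $k$ columns of $G_k$ are linearly dependent, and then translate (i) and (ii) into conditions on ${\bf A}$ and $\delta$.

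For (ii), write ${\bf d}_1=(0,\dots,0,1)^T$ and ${\bf d}_2=(0,\dots,0,1,\delta)^T$ for the last two columns of $G_k$. A choice of $k$ columns lies in one of the four cases of the proof of Theorem \ref{th.MDS condition}; by the computations of $\det(B_2)$ and $\det(B_4)$ there, any $k$-subset containing ${\bf d}_1$ is automatically independent, so a dependent $k$-subset either consists of $k$ of the first $n$ columns, indexed by $L$ (dependent exactly when $\sum_{i\in L}\alpha_i=0$, from the formula for $\det(B_1)$), or consists of ${\bf d}_2$ together with $k-1$ of the first $n$ columns, indexed by $M$ (dependent exactly when $\delta=(\sum_{i\in M}\alpha_i)^2-\sum_{i<j\in M}\alpha_i\alpha_j$, from the formula for $\det(B_3)$). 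Hence (ii) is equivalent to the disjunction of the last condition of part~(1) and the last condition of part~(2).

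For (i), I split again on how many of ${\bf d}_1,{\bf d}_2$ appear among the $k+1$ chosen columns. If all $k+1$ come from the first $n$ columns, say indexed by $I$, then the $k\times(k+1)$ submatrix has rank $k$ iff some $k$-subset $J\subseteq I$ is independent, i.e. $\sum_{i\in J}\alpha_i\neq0$ for some such $J$, which is the first condition of parts (1)/(2). If the selection contains ${\bf d}_1$, or both ${\bf d}_1$ and ${\bf d}_2$, one may discard one of the first-$n$ columns and invoke the independence statements for $B_2$, respectively $B_4$, so such selections always have full rank and impose nothing. The decisive case is ${\bf d}_2$ together with $k$ of the first $n$ columns indexed by $I'$: full rank holds iff either $\sum_{i\in I'}\alpha_i\neq0$ (those $k$ columns are already independent) or $\delta\neq(\sum_{i\in J'}\alpha_i)^2-\sum_{i<j\in J'}\alpha_i\alpha_j$ for some $(k-1)$-subset $J'\subseteq I'$ (use $\det(B_3)$ with index set $J'$). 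I claim this disjunction collapses to the second alternative alone, which is the second condition of parts (1)/(2). Indeed, put $e_1=\sum_{i\in I'}\alpha_i$ and $e_2=\sum_{i<j\in I'}\alpha_i\alpha_j$; for $J'=I'\setminus\{m\}$ a direct computation gives $(\sum_{i\in J'}\alpha_i)^2-\sum_{i<j\in J'}\alpha_i\alpha_j=e_1^2-e_2-\alpha_m e_1$, so if $e_1\neq0$ and $\delta$ equalled this quantity for every $m\in I'$, then all the $\alpha_m$ with $m\in I'$ would equal $(e_1^2-e_2-\delta)e_1^{-1}$, contradicting $|I'|=k\ge3$ and the distinctness of the $\alpha_i$. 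Thus whenever $\sum_{i\in I'}\alpha_i\neq0$ the second alternative already holds, and (i) is equivalent to the conjunction of the first two conditions of parts (1)/(2).

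Putting these together, $\C_k({\bf A},{\bf v})$ is AMDS if and only if the first two conditions hold together with one of the last two conditions, which is exactly ``part~(1) or part~(2)''. The one step requiring real care is the ${\bf d}_2$-plus-$k$-columns subcase of (i): the naive analysis produces the extra disjunct $\sum_{i\in I'}\alpha_i\neq0$, and it has to be absorbed using the symmetric-function identity above; everything else is routine bookkeeping against the determinants $\det(B_1),\dots,\det(B_4)$ already evaluated in the proof of Theorem \ref{th.MDS condition}.
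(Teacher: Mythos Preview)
Your proof is correct and follows the same approach as the paper: reduce AMDS to ``some $k$ columns of $G_k$ are dependent'' together with ``every $k+1$ columns of $G_k$ have full rank'', then split both conditions on how many of ${\bf d}_1,{\bf d}_2$ appear and recycle the determinants $\det(B_1),\dots,\det(B_4)$ from the proof of Theorem~\ref{th.MDS condition}. Your treatment of the ${\bf d}_2$-plus-$k$-columns subcase is in fact more careful than the paper's---the paper only inspects the $k\times k$ minors that retain ${\bf d}_2$ and asserts the resulting condition is an ``iff'', whereas you also account for the minor obtained by dropping ${\bf d}_2$ and explicitly show, via the symmetric-function identity $(\sum_{i\in J'}\alpha_i)^2-\sum_{i<j\in J'}\alpha_i\alpha_j=e_1^2-e_2-\alpha_me_1$, that the disjunct $\sum_{i\in I'}\alpha_i\neq0$ is absorbed; this step is what actually justifies the necessity direction of the paper's stated condition.
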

		
		\begin{proof} 
			The code $\C_k({\bf A},{\bf v})$ is AMDS if and only if 
				the followings hold: 
			\begin{itemize}
				\item There exists $k$ linearly dependent columns in $G_k$  in Equation \eqref{eq.generator matrix}. 
				
				\item Any $k+1$ columns from $G_{k}$ in Equation \eqref{eq.generator matrix} are of full rank. 
			\end{itemize}
%
			
			Let ${\bf d}_1=(0,0,\cdots, 0,1)^T \mbox{ and } {\bf d}_2=(0,0,\cdots,0,1,\delta)^T$.  Since any $k+1$ columns from $G_{k}$ in Equation \eqref{eq.generator matrix} are linearly independent if and only if the rank of each $k\times(k+1)$ submatrix of $G_{k}$ is $k$. So we need to calculate the conditions required for the rank of each $k\times(k+1)$ submatrix of $G_{k}$ to be equal to $k$. Then we can divide the proof into the following cases. 
		
			\begin{itemize}
				\item [] {\textbf{Case 1.}}   Assume that the $k\times(k+1)$ submatrix of $G_{k}$ consisting of $k+1$ columns of the first $n$ columns from $G_{k}$ does not contain ${\bf d}_1$ and ${\bf d}_2$. 
				Suppose that the submatrix contains any $k+1$ columns of the first $n$ columns from $G_{k}$ and note $I=\{i_1,i_2,\cdots ,i_{k+1}\}\subseteq \{1,2,\cdots,n\}$. We need to determine the necessary and sufficient conditions for the rank of $k\times(k+1)$ submatrix $E_1$, where
				\begin{equation*}\label{detE_1}
					E_1=\left(\begin{array}{cccccccc}
						1 &1 &\cdots  &1 &1\\ 
						\alpha_{i_1} &\alpha_{i_2} &\cdots & \alpha_{i_k} & \alpha_{i_{k+1}}\\ 
						\vdots &\vdots &\ddots &\vdots &\vdots \\ 
						\alpha_{i_1}^{k-2} &\alpha_{i_2}^{k-2} &\cdots & \alpha_{i_k}^{k-2}& \alpha_{i_{k+1}}^{k-2}\\ 
						\alpha_{i_1}^{k} &\alpha_{i_2}^{k} &\cdots & \alpha_{i_k}^{k}& \alpha_{i_{k+1}}^k
					\end{array}\right)_{k\times (k+1)}.
				\end{equation*} 
				
				Deleting any column of the matrix $E_1$, let $J=\{j_1,j_2,\cdots,j_k\}\subseteq I$, then results in a $k\times k$ square matrix 
				\begin{equation*}\label{detE_11}
					E_1'=\left(\begin{array}{cccccccc}
						1 &1 &\cdots  &1\\ 
						\alpha_{j_1} &\alpha_{j_2} &\cdots & \alpha_{j_{k}}\\ 
						\vdots &\vdots &\ddots &\vdots \\ 
						\alpha_{j_1}^{k-2} &\alpha_{j_2}^{k-2} &\cdots & \alpha_{j_{k}}^{k-2}\\  
						\alpha_{j_1}^{k} &\alpha_{j_2}^{k} &\cdots & \alpha_{j_{k}}^{k}\\
					\end{array}\right)_{k\times k}.
				\end{equation*}
				
				Then 
				$$\mbox{det}(E_1')=\sum_{i\in J}\alpha_i\prod_{ i<j\in J}(\alpha_j-\alpha_i).$$
				
				Therefore, $\mbox{det}(E_1')\neq 0$ if and only if  for any subset $I\subseteq \{1,2,\cdots,n\}$ with size $k+1$, there exists a subset $J\subseteq I$ with size $k$, such that $\sum\limits_{i\in J}\alpha_i\neq 0$, and then the rank of matrix $E_1$ is $k$, equivalently, the matrix $E_1$ are of full rank if and only if for any subset $I\subseteq \{1,2,\cdots,n\}$ with size $k+1$, there exists a subset $J\subseteq I$ with size $k$, such that $\sum\limits_{i\in J}\alpha_i\neq 0$.

				\item [] {\textbf{Case 2.}}  Assume that the $k\times(k+1)$ submatrix of $G_{k}$ consisting of ${\bf d}_1$ and any $k$ columns of the first $n$ columns from $G_{k}$.
				Suppose that the submatrix contains ${\bf d}_1$ and any $k$ columns of the first $n$ columns from $G_{k}$ and note $K=\{i_1,i_2,\cdots ,i_{k}\}\subseteq \{1,2,\cdots,n\}$. We need to determine the necessary and sufficient conditions for the rank of $k\times(k+1)$ submatrix $E_2$, where
					\begin{equation*}\label{detE_2}
						E_2=\left(\begin{array}{cccccccc}
							1 &1 &\cdots &1 &0\\ 
							\alpha_{i_1} &\alpha_{i_2} &\cdots & \alpha_{i_k} & 0\\ 
							\vdots &\vdots &\ddots &\vdots &\vdots \\ 
							\alpha_{i_1}^{k-2} &\alpha_{i_2}^{k-2} &\cdots & \alpha_{i_k}^{k-2}& 0\\ 
							\alpha_{i_1}^{k} &\alpha_{i_2}^{k} &\cdots & \alpha_{i_k}^{k}& 1
						\end{array}\right)_{k\times (k+1)}.
				\end{equation*} 
				It is easy to see that the $k$ rows of the matrix $E_2$ are linearly independent, thus the matrix $E_2$ are of full rank.
				
				\item [] {\textbf{Case 3.}}  Assume that the $k\times(k+1)$ submatrix of $G_{k}$ consisting of ${\bf d}_2$ and any $k$ columns of the first $n$ columns from $G_{k}$.
				Suppose that the submatrix contains ${\bf d}_2$ and any $k$ columns of the first $n$ columns from $G_{k}$ and note $I'=\{i_1,i_2,\cdots ,i_{k}\}\subseteq \{1,2,\cdots,n\}$. We need to determine the necessary and sufficient conditions for the rank of $k\times(k+1)$ submatrix $E_3$, where
					\begin{equation*}\label{detE_3}
						E_3=\left(\begin{array}{cccccccc}
							1 &1 &\cdots &1&0\\ 
							\alpha_{i_1} &\alpha_{i_2} &\cdots & \alpha_{i_k} & 0\\ 
							\vdots &\vdots &\ddots &\vdots &\vdots \\ 
							\alpha_{i_1}^{k-2} &\alpha_{i_2}^{k-2} &\cdots & \alpha_{i_k}^{k-2}& 1\\ 
							\alpha_{i_1}^{k} &\alpha_{i_2}^{k} &\cdots & \alpha_{i_k}^{k}& \delta
						\end{array}\right)_{k\times (k+1)}.
				\end{equation*} 
				
				Deleting any column of the first $k$ columns of the matrix $E_1$, let $J'=\{j_1,j_2,\cdots,j_{k-1}\}\subseteq I'$, then results in a $k\times k$ square matrix 
					\begin{equation*}\label{detE_31}
						E_3'=\left(\begin{array}{cccccccc}
							1 &1 &\cdots  &1&0\\ 
							\alpha_{j_1} &\alpha_{j_2} &\cdots & \alpha_{j_{k-1}}&0\\ 
							\vdots &\vdots &\ddots &\vdots &\vdots \\ 
							\alpha_{j_1}^{k-2} &\alpha_{j_2}^{k-2} &\cdots & \alpha_{j_{k-1}}^{k-2}&1\\  
							\alpha_{j_1}^{k} &\alpha_{j_2}^{k} &\cdots & \alpha_{j_{k-1}}^{k}&\delta\\
						\end{array}\right)_{k\times k}.
				\end{equation*}
			    Then 
					\begin{align*}
						\mbox{det}(E_3')&=
						\delta\cdot \left|\begin{array}{cccccccc}
							1 &1 &\cdots  &1 \\ 
							\alpha_{j_1} &\alpha_{j_2} &\cdots & \alpha_{j_{k-1}} \\ 
							\vdots &\vdots &\ddots  &\vdots \\ 
							\alpha_{j_1}^{k-2} &\alpha_{j_2}^{k-2} &\cdots & \alpha_{j_{k-1}}^{k-2} \\  
						\end{array}\right|-
						\left|\begin{array}{cccccccc}
							1 &1 &\cdots  &1\\ 
							\alpha_{j_1} &\alpha_{j_2} &\cdots & \alpha_{j_{k-1}}\\ 
							\vdots &\vdots &\ddots   &\vdots \\ 
							\alpha_{j_1}^{k-3} &\alpha_{j_2}^{k-3} &\cdots & \alpha_{j_{k-1}}^{k-3}\\  
							\alpha_{j_1}^{k} &\alpha_{j_2}^{k} &\cdots & \alpha_{j_{k-1}}^{k}\\
						\end{array}\right|\\
						&=\bigg(\delta-(\sum\limits_{i\in J'}\alpha_i)^2+\sum\limits_{ i<j\in J'}\alpha_i\alpha_j\bigg) \cdot \prod\limits_{ i<j\in J'}(\alpha_j-\alpha_i).
					\end{align*}
					Therefore, $\mbox{det}(E_3')\neq 0$ if and only if for any subset $I'\subseteq \{1,2,\cdots,n\}$ with size $k$, there exists a subset $J'\subseteq I'$ with size $k-1$, such that  $\delta-(\sum\limits_{i\in J'}\alpha_i)^2+\sum\limits_{ i<j\in J'}\alpha_i\alpha_j\neq 0$, and then the rank of matrix $E_3$ is $k$, equivalently, the matrix $E_3$ are of full rank if and only if for any subset $I'\subseteq \{1,2,\cdots,n\}$ with size $k$, there exists a subset $J'\subseteq I'$ with size $k-1$, such that  $\delta-(\sum\limits_{i\in J'}\alpha_i)^2+\sum\limits_{ i<j\in J'}\alpha_i\alpha_j\neq 0$. 
				
				\item [] {\textbf{Case 4.}}  Assume that the $k\times(k+1)$ submatrix of $G_{k}$ consisting of ${\bf d}_1$, ${\bf d}_2$ and any $k-1$ columns of the first $n$ columns from $G_{k}$.
					Suppose that the submatrix contains ${\bf d}_1$, ${\bf d}_2$  and any $k-1$ columns of the first $n$ columns from $G_{k}$ and note $K'=\{i_1,i_2,\cdots ,i_{k-1}\}\subseteq \{1,2,\cdots,n\}$. We need to determine the necessary and sufficient conditions for the rank of $k\times(k+1)$ submatrix $E_4$, where
					\begin{equation*}\label{detE_4}
						E_4=\left(\begin{array}{cccccccc}
						1 &1 &\cdots  &1 &0 &0\\ 
						\alpha_{i_1} &\alpha_{i_2} &\cdots & \alpha_{i_{k-1}}& 0 &0\\ 
						\vdots &\vdots &\ddots &\vdots &\vdots &\vdots\\ 
						\alpha_{i_1}^{k-3} &\alpha_{i_2}^{k-3} &\cdots & \alpha_{i_{k-1}}^{k-3} &0 &0\\ 
						\alpha_{i_1}^{k-2} &\alpha_{i_2}^{k-2} &\cdots & \alpha_{i_{k-1}}^{k-2} &0 &1\\
						\alpha_{i_1}^{k} &\alpha_{i_2}^{k} &\cdots & \alpha_{i_{k-1}}^{k} &1 &\delta
						\end{array}\right)_{k\times (k+1)}.
					\end{equation*} 
					It is easy to see that the $k$ rows of the matrix $E_4$ are linearly independent, thus the matrix $E_4$ are of full rank.
			\end{itemize}	
				
			In a word,  any $k+1$ columns  from $G_{k}$ in Equation \eqref{eq.generator matrix} are of full rank if and only if for any subset $I\subseteq \{1,2,\cdots,n\}$ with size $k+1$, there exists a  subset $J\subseteq I$ with size $k$, such that $\sum\limits_{i\in J}\alpha_i\neq 0$, and for any subset $I'\subseteq \{1,2,\cdots,n\}$ with size $k$, there exists a subset $J'\subseteq I'$ with size $k-1$, such that $ \delta-(\sum\limits_{i\in J'}\alpha_i)^2+\sum\limits_{ i<j\in J'}\alpha_i\alpha_j\neq 0$. 
				
			Thus, the code $\C_k({\bf A}, {\bf v})$ has parameters $[n+2, k, \ge n-k+2]$ in this case. Then the desired result follows from Theorem \ref{th.MDS condition}.
			Since any $k$ columns from $G_{k}$ is nonsingular if and only if $G_k$ is MDS, we can immediately get the desired result by Theorem \ref{th.MDS condition}.
			
			We finished the whole proof. 
		\end{proof}

		Combining Theorems \ref{th.MDS condition}, \ref{th.dAMDS condition} and \ref{th.AMDS condition} yields the following theorem.
		
		\begin{theorem}\label{th.NMDS condition}
			The code $\C_k({\bf A},{\bf v})$ is NMDS if and only if one of the followings holds:
			\begin{enumerate}
					\item [\rm (1)] For any subset $I\subseteq \{1,2,\cdots,n\}$ with size $k+1$, there exists a  subset $J\subseteq I$ with size $k$, such that $$\sum\limits_{i\in J}\alpha_i\neq 0,$$ for any subset $I'\subseteq \{1,2,\cdots,n\}$ with size $k$, there exists a subset $J'\subseteq I'$ with size $k-1$, such that $$ \delta-(\sum\limits_{i\in J'}\alpha_i)^2+\sum\limits_{ i<j\in J'}\alpha_i\alpha_j\neq 0,$$
					and there exists a subset $L\subseteq \{1,\cdots, n\}$ with size $k$ such that $$\sum_{i\in L}\alpha_i=0;$$
				
				\item [\rm (2)] For any subset $I\subseteq \{1,2,\cdots,n\}$ with size $k+1$, there exists a  subset $J\subseteq I$ with size $k$, such that $$\sum\limits_{i\in J}\alpha_i\neq 0,$$ for any subset $I'\subseteq \{1,2,\cdots,n\}$ with size $k$, there exists a subset $J'\subseteq I'$ with size $k-1$, such that $$ \delta-(\sum\limits_{i\in J'}\alpha_i)^2+\sum\limits_{ i<j\in J'}\alpha_i\alpha_j\neq 0,$$
					and there exists a subset $M\subseteq \{1,2,\cdots, n\}$ with size $k-1$ such that $$ \delta = (\sum_{ i\in M}\alpha_i)^2-\sum_{ i<j\in M}\alpha_i\alpha_j.$$
			\end{enumerate}
			
		\end{theorem}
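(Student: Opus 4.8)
The plan is to invoke the classical description of near MDS codes recorded in the excerpt: $\C_k({\bf A},{\bf v})$ is NMDS precisely when both it and its dual $\C_k({\bf A},{\bf v})^{\perp}$ are AMDS. Since $G_k$ is a generator matrix of $\C_k({\bf A},{\bf v})$ and hence a parity-check matrix of $\C_k({\bf A},{\bf v})^{\perp}$, the AMDS property of $\C_k({\bf A},{\bf v})$ is exactly the content of Theorem~\ref{th.AMDS condition}, while the AMDS property of $\C_k({\bf A},{\bf v})^{\perp}$ is exactly the content of Theorem~\ref{th.dAMDS condition}. So the proof reduces to intersecting the two lists of conditions produced by those theorems, and then rewriting the intersection in the itemized form of the statement.

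The key observation I would make is the overlap between the two lists. Both the AMDS criterion for $\C_k({\bf A},{\bf v})$ in Theorem~\ref{th.AMDS condition} and the AMDS criterion for $\C_k({\bf A},{\bf v})^{\perp}$ in Theorem~\ref{th.dAMDS condition} contain the very same disjunction: there is a size-$k$ subset $L$ with $\sum_{i\in L}\alpha_i=0$, or there is a size-$(k-1)$ subset $M$ with $\delta=(\sum_{i\in M}\alpha_i)^2-\sum_{i<j\in M}\alpha_i\alpha_j$. By Theorem~\ref{th.MDS condition} this disjunction is precisely the negation of the MDS property, i.e.\ the assertion that $G_k$ possesses $k$ linearly dependent columns. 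Therefore the dual-AMDS requirement is automatically implied by the primal-AMDS requirement, and the intersection of the two collapses to: the two full-rank conditions — for every size-$(k+1)$ subset $I$ there is a size-$k$ subset $J\subseteq I$ with $\sum_{i\in J}\alpha_i\neq 0$, and for every size-$k$ subset $I'$ there is a size-$(k-1)$ subset $J'\subseteq I'$ with $\delta-(\sum_{i\in J'}\alpha_i)^2+\sum_{i<j\in J'}\alpha_i\alpha_j\neq 0$ — together with the disjunction over $L$ and $M$. Splitting that disjunction into its two alternatives produces exactly cases (1) and (2) of the theorem.

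Alternatively, I could argue straight from Lemma~\ref{lem.NMDS2}, which is perhaps the cleaner bookkeeping route: its first bullet (any $k-1$ columns of $G_k$ are independent) always holds, and this has already been verified in the four-case determinant analysis inside the proof of Theorem~\ref{th.dAMDS condition}; its second bullet (some $k$ columns of $G_k$ are dependent) is the negation of Theorem~\ref{th.MDS condition}, which yields exactly the disjunction over $L$ and $M$; and its third bullet (any $k+1$ columns of $G_k$ are of full rank) is precisely the characterization established in the proof of Theorem~\ref{th.AMDS condition}. Either way, no new computation is needed. I do not anticipate a substantive obstacle here, since all the relevant determinant evaluations live in Theorems~\ref{th.MDS condition}, \ref{th.dAMDS condition} and \ref{th.AMDS condition}; the only point demanding care is the logical bookkeeping — confirming that the two AMDS condition lists are mutually compatible and that the shared clause ``($L$) or ($M$)'' is correctly identified, so that merging the lists neither loses a case nor introduces a spurious one.
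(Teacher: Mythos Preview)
Your proposal is correct and matches the paper's approach exactly: the paper simply states that combining Theorems~\ref{th.MDS condition}, \ref{th.dAMDS condition} and \ref{th.AMDS condition} yields Theorem~\ref{th.NMDS condition}, with no further argument. Your write-up is in fact more careful about the logical bookkeeping than the paper itself, correctly identifying that the disjunction over $L$ and $M$ is shared between the primal and dual AMDS criteria (being the negation of the MDS condition), so that intersecting the two lists collapses cleanly to the stated form.
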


		\section{Conclusions}\label{sec.concluding remarks}
		
		The main contributions of this paper are summarized as follows:
		\begin{itemize}
			\item We proved that $\C_k({\bf A}, {\bf v})$ is an extended code of the code $\C_4$ in Theorem \ref{th.extended code}.
			
			\item We gave the parameters and a parity-cheek matrix of $\C_k({\bf A}, {\bf v})$ in Theorems \ref{th.length and dimension} and  \ref{th.parity-check matrix}.

			\item We presented some sufficient and necessary conditions for $\C_k({\bf A}, {\bf v})$ to be an MDS code in Theorem \ref{th.MDS condition}.

			\item We proved that these codes are non-GRS by using the Schur method in Theorem \ref{th.non-GRS}.
			
			\item We derived and presented some sufficient and necessary conditions for both $\C_k({\bf A}, {\bf v})$ and its dual $\C_k({\bf A}, {\bf v})^{\perp}$  to be \textcolor{red}{an} AMDS codes in Theorems \ref{th.AMDS condition} and \ref{th.dAMDS condition}.
			
			\item We obtained and presented some sufficient and necessary conditions for $\C_k({\bf A}, {\bf v})$ to be \textcolor{red}{an} NMDS code in Theorem \ref{th.NMDS condition}.
			
		\end{itemize}

		\section*{Declarations}
		
		\noindent\textbf{Data availability} No data are generated or analyzed during this study.  \\
		
		\noindent\textbf{Conflict of Interest} The authors declare that there is no possible conflict of interest.

	\end{sloppypar}

\begin{thebibliography}{99}\setlength{\itemsep}{0.8mm}
			
			\bibitem{B2015}S. Ball, Finite geometry and combinatorial applications, Cambridge University Press, Cambridge, 2015. 
			
			\bibitem{Magma}W. Bosma, J. Cannon, C. Playoust, The Magma algebra system I, The user language, J. Symb. Comput. 24 (3-4) (1997) 235-265.
			
			\bibitem{BBPR2018}  P. Beelen,  M. Bossert,  S. Puchinger, J. Rosenkilde, Structural properties of twisted Reed-Solomon codes with applications to cryptography, in: Proc. IEEE Int. Symp. Inf. Theory (ISIT), 2018, pp. 946-950.
			
			\bibitem{BPR2022}  P. Beelen, S. Puchinger,  J. Rosenkilde, Twisted Reed-Solomon codes, IEEE Trans. Inf. Theory 68 (5) (2022) 3047-3061. 
		
			\bibitem{CW2023}W. Cheng, On parity-check matrices of twisted generalized Reed-Solomon codes, IEEE Trans. Inf. Theory 70 (5) (2024) 3213-3225.
			
			\bibitem{CHL2011} V.R. Cadambe, C. Huang, J. Li, Permutation code: Optimal exact-repair of a single failed node in MDS code based distributed storage systems, in: Proc. IEEE Int. Symp. Inf. Theory (ISIT), 2011, pp. 1225-1229.
			
		
			\bibitem{DL1994}  S. Dodunekov, I. Landjev,  On near-MDS codes, J. Geom., 54 (1-2) (1994) 30-43. 
			
			\bibitem{DT2020}C. Ding, C. Tang, Infinite families of near MDS codes holding $t$-designs, IEEE Trans. Inf. Theory 66 (9) (2020) 5419-5428. 
			
			\bibitem{FLL2020}X. Fang, M. Liu, J. Luo, New MDS Euclidean self-orthogonal codes, IEEE Trans. Inf. Theory 67 (1) (2020) 130-137. 
			
			\bibitem{FX2024} W. Fang, J. Xu, Deep holes of twisted Reed-Solomon codes, IEEE Int. Symp. Inf. Theory (ISIT), Athens, Greece, 2024, pp. 488-493.
			
			\bibitem{GLLS2023} G. Guo, R. Li, Y. Liu, H. Song, Duality of generalized twisted Reed-Solomon codes and Hermitian self-dual MDS or NMDS codes, Cryptogr. Commun. 15 (2) (2023) 383-395.
			
			\bibitem{GZ2023} H. Gu, J. Zhang, On twisted generalized Reed-Solomon codes with $\ell$ twists, IEEE Trans. Inf. Theory 70 (1) (2023) 145-153. 
			
			\bibitem{H1929} E.R. Heineman, Generalized Vandermonde determinants, Trans. Amer. Math. Soc. 31 (3) (1929) 464-476. 
			
			\bibitem{HF2023} D. Han, C. Fan, Roth-Lempel NMDS codes of non-elliptic-curve type, IEEE Trans. Inf. Theory 69 (9) (2023) 5670-5675.
			
			\bibitem{HL2023} B. He, Q. Liao, The properties and the error-correcting pair for lengthened GRS codes, Des. Codes Cryptogr. 92 (2023) 211-225. 
			
			
			\bibitem{HP2003} W.C. Huffman, V. Pless, Fundamentals of Error-Correcting Codes, Cambridge University Press, Cambridge, 2003.
		
			
			\bibitem{HYNL2021} D. Huang, Q. Yue, Y. Niu, X. Li, MDS or NMDS self-dual codes from twisted generalized Reed-Solomon codes, Des. Codes Cryptogr. 89 (9) (2021) 2195-2209.
			
			
			\bibitem{HZ2023} D. Han, H. Zhang, New constructions of NMDS self-dual codes, (2023).\url{https://doi.org/10.48550/arXiv.2308.01593}. 
			
			
			\bibitem{LCEL2022} G. Luo, X. Cao, M.F. Ezerman, S. Ling, Two new classes of Hermitian self-orthogonal non-GRS MDS codes and their applications, Adv. Math. Commun. 16 (4) (2022) 921-933.
			
			\bibitem{LDMTT2022} Q. Liu, C. Ding, S. Mesnager, C. Tang, V.D. Tonchev, On infinite families of narrow-sense antiprimitive BCH codes admitting $3$-transitive automorphism groups and their consequences, IEEE Trans. Inf. Theory 68 (5) (2022) 3096-3107. 
			
			\bibitem{LL2021} H. Liu, S. Liu, Construction of MDS twisted Reed-Solomon codes and LCD MDS codes, Des. Codes Cryptogr. 89 (2021) 2051-2065.
			
			
			\bibitem{LR2020} J. Lavauzelle, J. Renner, Cryptanalysis of a system based on twisted Reed-Solomon codes, Des. Codes Cryptogr. 88 (7) (2020) 1285-1300.
			
			\bibitem{LXW2008} Z. Li, L. Xing, X. Wang, Quantum generalized Reed-Solomon codes, Unified framework for quantum MDS codes. Phys. Rev. A 77 (1) (2008) 012308.
			
			\bibitem{LZ2024} Y. Li, S. Zhu, New non-GRS type MDS codes and NMDS codes, (2024). \url{https://doi.org/10.48550/arXiv.2401.04360}.
		
		
			\bibitem{MMP2013} I. M\'arquez-Corbella, E. Mart\'inez-Moro, R. Pellikaan, The non-gap sequence of a subcode of a generalized Reed-Solomon code, Des. Codes Cryptogr. 66 (2013) 317-333. 
			
			
			\bibitem{RL1989} R.M. Roth, A. Lempel, A construction of non-Reed-Solomon type MDS codes. IEEE Trans. Inf. Theory 35 (3) (1989) 655-657.
			
			\bibitem{SD2023} Z. Sun, C. Ding, The extended codes of a family of reversible MDS cyclic codes, IEEE Trans. Inf. Theory 70 (7) (2024) 4808-4822.
			
			\bibitem{SDC2023} Z. Sun, C. Ding, T. Chen, The extended codes of some linear codes, Finite Fields Appl. 96 (2024) 102401.
			
			\bibitem{ST2013} K. Sakakibara, J. Taketsugu, Application of random relaying of partitioned MDS codeword block to persistent relay CSMA over random error channels, in: Proc. $5$th Int. Congr. Ultra Modern Telecommun. Control Syst. Workshops (ICUMT), 2013, pp. 106-112.
			
			\bibitem{SV2018} D.E. Simos, Z. Varbanov, MDS codes, NMDS codes and their secret-sharing schemes, (2018). 
			\url{http://www.singacom.uva.es/~edgar/cactc2012/trabajos/CACT2012_SimosVarbanov.pdf}. 
			
			\bibitem{SYLH2022} J. Sui, Q. Yue, X. Li, D. Huang, MDS, near-MDS or $2$-MDS self-dual codes via twisted generalized Reed-Solomon codes, IEEE Trans. Inf. Theory 68 (12) (2022) 7832-7841. 
			
			\bibitem{SYS2023} J. Sui, Q. Yue, F. Sun, New constructions of self-dual codes via twisted generalized Reed-Solomon codes, Cryptogr. Commun. 15 (5) (2023) 959-978. 
			
			
			\bibitem{TR2018} A. Thomas, B. Rajan, Binary informed source codes and index codes using certain near-MDS codes, IEEE Trans. Commun. 66 (5) (2018) 2181-2190. 
			
		
			
			\bibitem{W2021} Y. Wu, Twisted Reed-Solomon codes with one-dimensional hull, IEEE Commun. Lett. 25 (2) (2021) 383-386. 
			
			
			
			\bibitem{WHL2021} Y. Wu, J.Y. Hyun, Y. Lee, New LCD MDS codes of non-Reed-Solomon type, IEEE Trans. Inf. Theory 67 (8) (2021) 5069-5078.
			
			\bibitem{WHLD2024} Y. Wu, Z. Heng, C. Li, C. Ding, More MDS codes of non-Reed-Solomon type, (2024). \url{https://doi.org/10.48550/arXiv.2401.03391}.
			
			\bibitem{ZLA2024} C. Zhu, Q. Liao, A class of double-twisted generalized Reed-Solomon codes, Finite Fields Appl. 95 (2024) 102395.
			
			\bibitem{ZLT2024} C. Zhu, Q. Liao, The (+)-extended twisted generalized Reed-Solomon code, Discret. Math. 347 (2) (2024) 113749. 
			
			\bibitem{ZWXLQY2009} Y. Zhou,  F. Wang, Y. Xin, S. Luo, S. Qing, Y. Yang, A secret sharing scheme based on near-MDS codes, in: Proc. IC-NIDC, 2009, pp. 833-836. 
			
			\bibitem{ZZT2022} J. Zhang, Z. Zhou, C. Tang, A class of twisted generalized Reed-Solomon codes, Des. Codes Cryptogr. 90 (7) (2022) 1649-1658. 
			
		\end{thebibliography}
\end{document}